\let\oldthebibliography\thebibliography
\renewcommand{\thebibliography}[1]{\oldthebibliography{#1}\addtolength{\itemsep}{-0.5pt}}
\definecolor{darkred}{rgb}{0.5,0,0}
\definecolor{lightblue}{rgb}{0,0.4,0.8}
\definecolor{darkgreen}{rgb}{0,0.5,0}
\NewDocumentCommand{\xnewtheorem}{m o m}{\IfNoValueTF{#2}{\newtheorem{#1}{#3}\expandafter\def\csname #1autorefname\endcsname{#3}}{
\newaliascnt{#1}{#2}
\newtheorem{#1}[#1]{#3}
\aliascntresetthe{#1}
\expandafter\def\csname #1autorefname\endcsname{#3}
}
}
\newtheorem{theorem}{Theorem}[section]
\theoremstyle{remark}
\newcommand{\ceil}[1]{\ensuremath{\lceil#1\rceil}}
\newcommand{\floor}[1]{\ensuremath{\lfloor#1\rfloor}}
\newcommand{\ctime}{\ensuremath{\mathfrak{T}}\xspace}
\DeclareMathOperator{\BIGO}{O}
\DeclareMathOperator{\BIGOMEGA}{\Omega}
\newcommand{\BigO}[1]{\ensuremath{\BIGO\left(#1\right)}}
\newcommand{\BigOmega}[1]{\ensuremath{\BIGOMEGA\left(#1\right)}}
\newcommand{\rd}{deterministic binary majority process\xspace}
\newcommand\qswap{$q$-swap\xspace}
\newcommand\fund[1]{\ifx&#1& \else $#1$-\fi permanent\xspace}
\newcommand\qfund{\fund{q}}
\newcommand\dtime{voting time\xspace}
\newcommand\Dtime{Voting Time\xspace}
\newcommand\contime{convergence time\xspace}
\newcommand{\conv}{converge\xspace}
\newcommand{\stab}{two-periodic\xspace}
\newcommand{\sstate}{two-periodic state\xspace}
\newcommand{\stabil}{converge\xspace}
\newcommand\Veven{\ensuremath{V_\text{{\normalfont even}}}\xspace}
\newcommand\Vodd{\ensuremath{V_\text{{\normalfont odd}}}\xspace}
\newcommand\ifrac[2]{#1/#2}
\newcommand\ctdp{\textsc{vtdp}\xspace}
\newcommand{\family}[1]{\ensuremath{\operatorname{fam}\left(#1\right)}}
\title{On the \Dtime of the\\ Deterministic Majority Process}
\author{ Dominik Kaaser \\ University of Salzburg \\ \emph{dominik@cosy.sbg.ac.at} \and
    Frederik Mallmann-Trenn \\ École Normale Supérieure\\ Simon Fraser University \\ \emph{fmallman@sfu.ca} \and
    Emanuele Natale
\\ Sapienza Università di Roma \\ \emph{natale@di.uniroma1.it}}
\date{}
\begin{document}

\nocite{KMN15}

\maketitle
\begin{abstract}
In the \rd we are given a simple graph where each node has one out of two initial
opinions. In every round, every node adopts the majority opinion among its
neighbors. By using a potential argument first discovered by Goles and Olivos (1980),
it is known that this process always \conv{s} in $\BigO{|E|}$ rounds to a
two-periodic state in which every node either keeps its opinion or changes
it in every round.

It has been shown by Frischknecht, Keller, and Wattenhofer (2013) that
the $\BigO{|E|}$ bound on the \contime of the \rd is indeed tight even for
dense graphs. However, in many graphs such as the complete graph, from any
initial opinion assignment, the process \conv{s} in just a constant number of
rounds.

By carefully exploiting the structure of the potential function by Goles and
Olivos (1980), we derive a new upper bound on the \dtime of the \rd
that accounts for such exceptional cases. We show that it is possible to
identify certain modules of a graph $G$ in order to obtain a new graph
$G^\Delta$ with the property that the worst-case \contime of $G^\Delta$
is an upper bound on that of $G$. Moreover, even though our upper bound can be
computed in linear time, we show that, given an integer $k$, it is NP-hard to decide
whether there exists an initial opinion assignment for which it takes more
than $k$ rounds to \conv{} to the two-periodic state.

\end{abstract}

\section{Introduction}

We study the \emph{\rd} which is defined as follows. We are given a graph $G = (V, E)$
where each node has one out of two opinions. The process runs in discrete
rounds where each node in every round computes and adopts the majority opinion among all of
its neighbors.

It is known that this deterministic process always \conv{s} to a two-periodic state. 
The \emph{\contime} of a given graph for a given
initial opinion assignment is the time it takes until the
\sstate{} is reached. 
In this work we give bounds on the \emph{\dtime},
which is the maximum \contime over all possible initial opinion assignments.

The \rd  has widespread
applications in the study of \emph{influence networks} in distributed computing \cite{FKW13}, 
distributed databases \cite{Gif79}, sensor networks \cite{BTV09}, the
competition of opinions in social networks \cite{MT14}, social behavior in game
theory \cite{DP94}, chemical reaction networks \cite{AAE07}, neural and
automata networks \cite{G90}, and cells' behavior in biology \cite{CC12}. 
Variants of the \rd have been used
in the area of distributed
community detection \cite{RAK07,KPS13,CG10}, where the \dtime is essentially 
the convergence time of the proposed community-detection protocols. 

Among its many probabilistic variants that have been previously considered, plenty of work concerns \emph{randomized voting} where in each step every node 
is allowed to contact a random sample of its neighbors and updates its current 
opinion according to the majority opinion in that sample
\cite{AF14,BMPS05,CEOR12,DW83,HL75,HP01,L85,LN07,Mal14,O12}.

In an algorithmic game theoretic setting, the \rd can be seen as the simplest
\emph{discrete preference games} \cite{CKO13}. In this game theoretic
perspective, the existence of so-called monopolies has been investigated
\cite{ACF14}. A \emph{monopoly} in a graph is a set of nodes which 
start with the same opinion, causing all other nodes to eventually adopt this opinion. 
In the distributed computing area, a lot of research has been done to find small
monopolies, see for example \cite{P02}. It has also been shown that there exist families of
graphs with constant-size monopolies \cite{Ber01}. More recently,
classes of graphs which do not have small monopolies have been investigated
\cite{P14}.

Many of these results relate to the \dtime of the \rd. It was proven
independently by Goles and Olivos
\cite{GO80}, and Poljak and Sůra \cite{PS83} with the same potential function argument that the \rd always \stabil{s} to a \sstate.
They later (independently) refined and
generalized the potential function argument in several directions
\cite{GFP85,PT86,GO88,G89}. Their proof was popularized in the \emph{Puzzled}
columns of Communications of the ACM \cite{Win08b,Win08}. Recently, the same
problem has been studied on infinite graphs w.r.t.\ a given probability
distribution on the initial opinion assignments \cite{BCO14}. In \cite{TT13}, the authors provide a bound on the number of times a node in a given bounded-degree graph changes its opinion. Both \cite{BCO14} and \cite{TT13} also investigate the probability that in the \sstate all nodes hold the same opinion. 

As for the maximum time it takes for the process to \stabil over all initial opinion assignments, Frischknecht et al.\ \cite{FKW13} note that the
potential argument by Goles et al.\ \cite{GO80, PS83, Win08} can be used to prove an
$\BigO{|E|}$ upper bound. They furthermore show
that this upper bound is tight in general, by designing a class of
graphs in which the \rd takes at least $\BigOmega{|V|^2}$ rounds to \stabil from a given initial opinion assignment. This construction
has later been extended to prove lower bounds for weighted and multi-edges
graphs by Keller et al.\ \cite{KPW14}.

A lot of attention has been given to the \sstate to which the \rd \conv{s} to. However, 
besides the $\BigO{|E|}$ upper bound that follows from the result by Goles et al.\ \cite{GO80, PS83, Win08},
no further upper bound on the \dtime that holds for any initial opinion assignment has been proved.
Still, one can observe that in many graphs the \dtime is much smaller than $\BigO{|E|}$, e.g., the \dtime of the complete graph is one.

We show that for the \rd the
question whether the \dtime is greater than a given number is NP-hard. While
for many generalizations of the \rd many decision problems are known to be
NP-hard, at the best of our knowledge this is the first NP-hardness proof that
does not require any additional mechanisms besides the bare majority rule of
the \rd.
However, as we show in the rest of the paper, it is possible to obtain upper bounds on the \dtime which can be computed in linear time.
A module of a graph is a subset of vertices $S$ such that for
each pair of nodes $u,v\in S$ it holds that $N(u)\setminus S=N(v)\setminus S$.
By carefully exploiting the structure of the potential function by Goles et
al.\ and leveraging the particular behavior that certain modules of the
graph exhibit in the \rd,  we are able to prove that the \dtime of a graph can be
bounded by that of a smaller graph that can be constructed in linear time by
contracting suitable vertices. 

We obtain a new upper bound that asymptotically improves on the  previous
$\BigO{|E|}$ bound on graph classes which are characterized by a high number of
modules that are either cliques or independent sets.  For instance, the Turán
graph $T(n,r)$ is the graph formed by partitioning a set of $n$ vertices into
$r$ subsets, with sizes as equal as possible, and connecting two vertices by an
edge whenever they belong to different subsets.  For the \contime of the Turán
graph $T(n,r)$ we obtain an $\BigO{r^2}$ bound, compared to the previously best
known bound of $\BigO{n^2}$. Also, for the \contime of full $d$-ary trees we
get an  $\BigO{|V|/d}$ bound, compared to $\BigO{|V|}$ originating from the
$\BigO{|E|}$ bounds. Our bound relies on a well-known graph contraction
technique, e.g., see the notion of \emph{identical vertices} in \cite{SSKC14}.

\subsection{Preliminaries}

We are given a connected graph $G = (V, E)$ and an initial opinion assignment which we define now.
\begin{definition}
An opinion assignment $f_t$ in round $t \geq 0$ is a function $f_t : V \rightarrow
\{0, 1\}$ which assigns for each $v \in V$ an opinion  with
\begin{equation*}
f_t(v) =
	\begin{cases}
		1  & \text{if $v$ has opinion $1$ at time $t$}\\
		0  & \text{if $v$ has opinion $0$ at time $t$.}
	\end{cases}
\end{equation*}
We will also denote opinion $1$ as \emph{white} and opinion $0$ as \emph{black}. 
The opinion assignment at time $t = 0$ is called \emph{initial opinion assignment}.
\end{definition}

The \rd can be defined as follows. Let $v$ be an arbitrary
but fixed vertex and $N(v)$ the set of neighbors of $v$. To compute
$f_{t+1}(v)$ the node $v$ computes the majority opinion of all of its neighbors
in $N(v)$. In the case of a tie the node behaves \emph{lazily}, that is, $v$ stays
with its own opinion. Otherwise, there is a \emph{clear majority} and the
node adopts the majority opinion. This leads to the following
definition.

\begin{definition} \label{def:majority-voting}
Let $G = (V, E)$ be a graph and let $f_0$ be an initial opinion assignment such that $f_0: V
\rightarrow \{0, 1\}$. The \rd is the
series of opinion assignments that satisfies the following
rule.
\begin{equation*}\
f_{t+1}(v) = \begin{cases}
		0	& \text{\normalfont if } \left|\left\{u \in N(v): f_t(u) = 0 \right\}\right| > \left|\left\{u \in N(v): f_t(u) = 1 \right\}\right| \\
		1	& \text{\normalfont if } \left|\left\{u \in N(v): f_t(u) = 0 \right\}\right| < \left|\left\{u \in N(v): f_t(u) = 1 \right\}\right| \\
	f_t(v)	& \text{\normalfont otherwise}\\
\end{cases}
\end{equation*}
\end{definition}

Note that the pair $\left(G,f_0 \right)$ completely
determines the behavior of the system according to the majority process. 
We next define the main object of this work, namely the voting time.

\begin{definition}
Given a graph $G=(V,E)$ and any initial opinion assignment $f_0$ on $V$, the
\emph{\contime} \ctime of the majority process on $G$ w.r.t.\ $f_0$ is
$\ctime = \ctime(G,f_0) = \min \left\{t:\forall v ~ f_{t+2}(v)=f_t(v) \right\}.$
The \emph{\dtime} of $G$ is defined as $\displaystyle \max_{ f_0 \in \{ 0,1\}^{V} } \ctime(G,f_0)$.
\end{definition}

Observe that $\ctime$ is indeed the number of steps until the
process \conv{s} to a periodic state.
This holds since the process is completely determined by the current opinion assignment. Thus $f_{t+2}(v)=f_t(v)$ also implies that $f_{t+3}(v)=f_{t+1}(v)$ for all nodes $v$. 

In the following we assume without loss of generality that $G$ is connected.
For disconnected graphs the \rd runs independently in each
connected component. Therefore, the resulting upper bounds on the
\dtime time can be replaced by the maximum over the corresponding bounds
in the individual connected components of $G$.

\subsection{Our Contribution}
First we define the \emph{\dtime decision problem} \ctdp and
show that it is NP-complete. 

\begin{definition}[\dtime decision problem, \ctdp]
For a given graph $G$ and an integer $k$, is there an assignment of initial
opinions such that the \dtime of $G$ is at least $k$?
\end{definition}

\begin{theorem}\label{thm:hardness}
Given a general simple graph $G$, \ctdp is NP-complete.
\end{theorem}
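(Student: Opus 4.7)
The plan has two parts. Membership in NP is essentially free from the $\BigO{|E|}$ bound mentioned earlier: given $(G,k)$, use the initial opinion assignment $f_0$ as the certificate, simulate the \rd for $\min\{k,\BigO{|E|}\}$ rounds (each round costs $\BigO{|E|}$), and accept if $f_{t+2}(v)\neq f_t(v)$ for some $v$ and every $t<k$. Since the Goles--Olivos potential argument bounds the \contime by $\BigO{|E|}$, the verification runs in polynomial time.

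For NP-hardness I would reduce from a well-chosen Boolean satisfiability problem, most naturally 3-SAT or a symmetric variant such as NAE-3-SAT whose $0$/$1$-symmetry matches the two-opinion setting of the \rd. Given an instance $\varphi$ with variables $x_1,\ldots,x_n$ and clauses $C_1,\ldots,C_m$, the goal is to build in polynomial time a graph $G_\varphi$ together with an integer $k(\varphi)$ such that $G_\varphi$ admits some initial opinion assignment of \contime at least $k(\varphi)$ if and only if $\varphi$ is satisfiable.

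The construction would be organized around three kinds of gadgets. \emph{Variable gadgets}: for each $x_i$, a small subgraph with exactly two admissible ``stable'' initial configurations, a black one and a white one, encoding the truth value of $x_i$; the gadget should be designed so that any other initial pattern on it is damped out within a constant number of rounds, which effectively forces an adversary who maximizes the \contime to choose a Boolean assignment. \emph{Clause gadgets}: attached to the three literal-nodes that appear in $C_j$ and shaped so that, via the majority rule, they emit a long oscillating ``front'' (in the spirit of the quadratic construction of Frischknecht, Keller, and Wattenhofer) if and only if at least one of their literals is white. \emph{A shared timer/amplifier}: a subgraph coupled to all clause gadgets whose \contime grows only when every clause simultaneously emits its front, so that the overall \contime reaches $k(\varphi)$ precisely when all $m$ clauses are satisfied.

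The main obstacle is adversarial control of the initial assignment. Because the \dtime maximizes over all $f_0$, we must guarantee that ``cheating'' assignments -- those that fail to encode a Boolean assignment, put inconsistent values on the two endpoints of a variable gadget, or try to activate clause gadgets from the inside -- \conv{} in strictly fewer than $k(\varphi)$ rounds. The cleanest tool for this is a careful accounting with the Goles--Olivos potential $\Phi$ restricted to each gadget: I would show that the potential of any non-canonical configuration drops by $\BigOmega{1}$ per round, yielding an $\BigO{1}$ bound on that gadget's contribution and an upper bound strictly below $k(\varphi)$ on the whole \contime, while the ``honest'' encodings of satisfying assignments realize the intended $\BigOmega{k(\varphi)}$ propagation. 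Keeping the sizes of all gadgets and of $k(\varphi)$ polynomial in $|\varphi|$ is routine once the gadgets above are in place, which finally certifies both directions of the reduction.
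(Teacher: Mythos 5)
Your overall architecture is the same as the paper's: membership in NP via simulation under the \BigO{|E|} bound (the paper just asserts this step), and hardness via a reduction from \textsc{3sat} with variable gadgets, clause gadgets, and a long timer whose activation encodes satisfiability. The paper's concrete instantiation is: odd-size literal cliques as variable gadgets, single-node \textsc{or}-gates per clause, one \textsc{and}-gate, and a chain of $n$ four-node ``$2/3$-gates'' (each anchored to a large mega-clique) as the timer, with threshold $k=h+1=4n+4$. So far your plan is on track.

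The genuine gap is in your mechanism for the soundness direction, i.e., showing that every ``cheating'' assignment \conv{s} strictly before $k(\varphi)$. You propose to show that the Goles--Olivos potential of any non-canonical configuration ``drops by $\BigOmega{1}$ per round, yielding an $\BigO{1}$ bound on that gadget's contribution.'' This does not work for two reasons. First, the potential drops by at least a constant per round for \emph{every} non-converged configuration, canonical or not --- that is precisely what gives the generic $\BigO{|E|}$ bound --- so an $\BigOmega{1}$ per-round drop distinguishes nothing; and a drop of $\BigOmega{1}$ per round yields a bound of order the initial potential, not $\BigO{1}$. Since the variable gadgets must be large (the paper's cliques have size $\Theta(m+n)$ so that external edges cannot flip them), their internal potential is polynomially large and no constant bound follows. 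Second, the potential counts bad arrows globally and is not monotone when restricted to a gadget, because bad arrows cross gadget boundaries. The paper avoids the potential entirely here and instead proves direct stabilization-time lemmas: an odd clique becomes monochromatic after one step and stable after two; each timer node has enough edges into a mega-clique that it can change color at most once after step $3$; and any out-of-order or premature activation of the timer (including adversarial initial colorings of the timer path itself, which your proposal does not address) truncates the run below $k$. You would need to replace your potential-based damping argument with lemmas of this combinatorial kind before the reduction is sound.
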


Then, in \autoref{sect:main-result} we extend known approaches 
to derive upper bounds on the \dtime, which are tight for general graphs. 
In \autoref{sect:symmetry}, we identify the following subsets of nodes that play a crucial role in determining the \dtime of the \rd. 
\begin{definition}
A set of nodes $S$ is called a \textit{family} if and only if for all pairs of nodes $u,v\in S$ we have
$N\left( u \right)\setminus\left\{ v \right\}=N\left( v \right)\setminus\left\{ u \right\}$. We say that a family $S$ is \emph{proper} if $|S|>1$.
\end{definition}
The set of families of a graph forms a partition of the nodes into equivalence classes. 
Our main~contribution is a proof that the \dtime of the \rd is bounded by that
of a new graph obtained by contracting its families into one or two nodes, as stated in the following
theorem. 

\begin{definition}
Given a graph $G=(V,E)$, its \textit{asymmetric graph} 
$G^{\Delta}=(V^{\Delta},E^{\Delta})$ is the subgraph of $G$ induced
by the subset $V^\Delta \subseteq V$ constructed by replacing every family of odd-degree 
non-adjacent nodes with one node and replacing any other proper family with two nodes.
\end{definition}

\begin{theorem} \label{thm:asym_upper}
Given any initial opinion assignment on a graph $G=(V,E)$, the \dtime
of the \rd is at most \[ 1+\min \left \{
|E^{\Delta}|-\frac{|V^{\Delta}_{odd}|}2,
\frac{|E^{\Delta}|}2+\frac{|V^{\Delta}_{even}|}4+\frac{7}{4}\cdot|V^\Delta|\right \} \enspace . \]
Furthermore, this bound can be computed in $\BigO{|E|}$ time.
\end{theorem}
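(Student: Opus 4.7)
The plan is to reduce the analysis of the \rd on $G$ to the same process on its contracted graph $G^\Delta$, and then bound the \dtime of $G^\Delta$ via two complementary potential-function arguments; the theorem's upper bound is then the minimum of the two, plus a constant slack from the reduction.

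The first main step is a structural synchronization lemma for families. Inside any proper family $S$, every node sees the same external neighborhood (and possibly all other nodes of $S$), so after a single transient round the nodes of $S$ collapse into at most two opinion blocks that evolve coherently thereafter. The delicate case distinguishes families of pairwise non-adjacent nodes whose common external degree is odd, where a strict external majority is forced and all nodes of $S$ acquire a single common opinion after one round and then stay in lockstep, from every other proper family, where two blocks may persist and must be represented by two contracted super-nodes. This justifies the definition of $V^\Delta$ with either one or two representatives per family, and shows that after one round the dynamics on $G$ are faithfully simulated, vertex-by-vertex, by the dynamics of the \rd restricted to $G^\Delta$, producing the additive constant in the stated bound.

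The second step is to bound the \dtime of $G^\Delta$ by two different potential arguments. The first is a refinement of the Goles--Olivos edge potential $\Phi_t = \sum_{\{u,v\}\in E^\Delta}\mathbf{1}[f_t(u)\neq f_{t+1}(v)]$, which is monotone non-increasing and drops strictly in every non-periodic round; the improvement over the trivial $|E^\Delta|$ upper bound comes from observing that every odd-degree vertex of $G^\Delta$ has a strict majority in its neighborhood at every round, yielding a per-vertex saving that aggregates to the $-|V^\Delta_{odd}|/2$ correction. The second argument instead tracks a vertex-based potential that counts actual opinion changes together with a correction term for even-degree nodes that tie; a careful step-wise analysis then yields the $|E^\Delta|/2+|V^\Delta_{even}|/4+(7/4)|V^\Delta|$ expression, which dominates the edge-based bound on graphs that are much denser than they are wide.

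Finally, the linear-time computability follows because families coincide with the equivalence classes of the standard twin relation, which can be computed in $\BigO{|V|+|E|}$ time by partition refinement; once $V^\Delta$ and $E^\Delta$ are known, counting $|E^\Delta|$, $|V^\Delta_{odd}|$, and $|V^\Delta_{even}|$ is immediate. I expect the hardest step to be the synchronization simulation itself: I must show not only that the process on $G$ resembles that on $G^\Delta$, but that the number of each opinion within a family at round $t+1$ is determined \emph{exactly} by the block sizes at round $t$ together with the external opinion pattern, so that the one- or two-super-node representation reproduces the internal dynamics precisely. The edge-case bookkeeping for families that straddle the interaction between parity of external degree and internal adjacency---exactly what forces the special treatment of odd-degree non-adjacent families in the definition of $G^\Delta$---is where the proof will require the most care.
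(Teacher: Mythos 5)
Your overall architecture (families, the two potential bounds, modular decomposition for linear-time computability) matches the paper's, and your explanation of why an odd-degree non-adjacent family needs only one representative while every other proper family needs two is correct. However, the central step of your reduction---that after one transient round the process on $G$ is \emph{faithfully simulated, vertex-by-vertex}, by the \rd on $G^\Delta$---is false. Contracting a family $S$ to one or two nodes changes the number of votes that $S$ contributes to each of its common external neighbors $w$: in $G$ the node $w$ receives $|S_0|$ and $|S_1|$ votes from the two opinion blocks of $S$, while in $G^\Delta$ it receives at most one vote from each, so the margin at $w$ is not preserved and its majority decision can flip. The trajectories on $G$ and $G^\Delta$ therefore genuinely diverge; the paper itself records this in \autoref{sec:furthercompprops} (the lemma accompanying \autoref{fig:example-circleX}), exhibiting $G$ and $f$ with $\ctime(G^\Delta,f)<\ctime(G,f)$. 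Any simulation argument would have to preserve block \emph{sizes}, which is exactly the information the contraction destroys, so this step cannot be repaired as stated.

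The paper's proof avoids simulation entirely: the dynamics are run on $G$ itself, and $G^\Delta$ serves only as a counting device for the Goles--Olivos potential. Two nodes of the same family that agree at some time $t$ agree at all later times, so the bad arrows incident to the members of one opinion block of a family appear and disappear in lockstep; since the potential must drop strictly in every round before convergence, it suffices to charge one representative arrow per equivalence class, and the number of classes is controlled by the arrows of $G^\Delta$ (one or two representatives per family, matching the definition of the asymmetric graph). The bounds of \autoref{thm:E_upper} and \autoref{thm:half_E_upper} are then invoked on this reduced arrow count. To salvage your write-up, replace the synchronization/simulation lemma by this lockstep-bad-arrows argument; your linear-time step is fine, and your second step should simply reuse the paper's proofs of \autoref{thm:E_upper} and \autoref{thm:half_E_upper} (the latter is a dichotomy between a two-unit potential drop per round and a \fund{q} configuration, not the vertex-based potential you sketch).
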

As mentioned before, this bound becomes $\BigO{r^2}$ for the Turán graph $T(n,r)$ and $\BigO{|V|/d}$ for $d$-ary trees.
Finally, in \autoref{sec:furthercompprops}  of the appendix, we give some insight into the computational properties of the \dtime.


\newcommand\megaclique{mega-clique\xspace}
\newcommand\MegaClique{Mega-Clique\xspace}

\section{NP-Completeness}\label{sec:np}

If it was possible to efficiently compute the worst-case \dtime, 
there would have been not much interest in investigating good upper bounds for it.
In this section, we show that this is unlikely to be the case. 
We prove \autoref{thm:hardness} by reducing \textsc{3sat} to the \dtime decision
problem.  
Given $\Phi\in$ \textsc{3sat}, we construct a graph $G =
G(\Phi)$ such that the \rd on $G$ simulates the evaluation of $\Phi$. The graph
$G$ consists of $h$ layers. The first layer represents an assignment of the
variables in $\Phi$, the remaining layers represent $\Phi$ and ensure that the
assignment of variables in $\Phi$ is valid. We will show that if $\Phi$ is
satisfiable, then there exists an initial assignment of opinions for which the
\contime is exactly $h+1$. If, however, $\Phi$ is not satisfiable, then any
assignment of opinions will result in a \contime strictly less than $h+1$. We now
give the formal proof.

\subsection*{Reduction}
\begin{figure}[h]
\makebox[\textwidth][c]{
\subcaptionbox{literal nodes and \textsc{or}-gates \label{fig:reduction-or} }{
\includegraphics{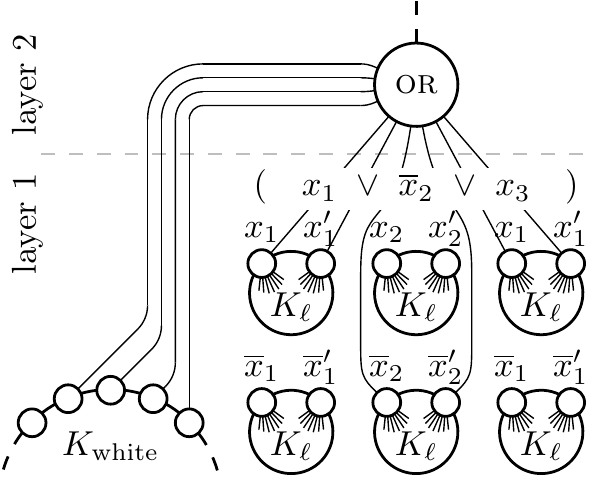}
}
\quad \subcaptionbox{\textsc{and}-gate \label{fig:reduction-and} }{
\includegraphics{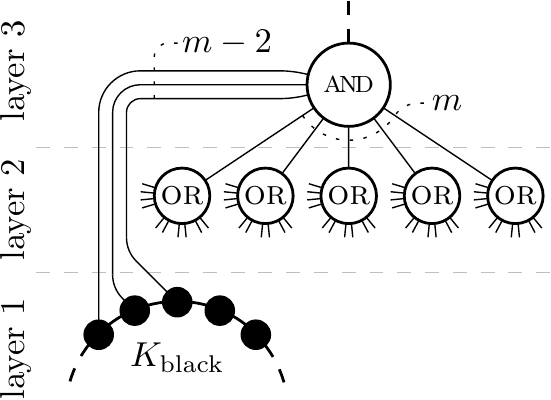}
}
\quad \subcaptionbox{$2/3$-gates \label{fig:reduction-two-third} }{
\includegraphics{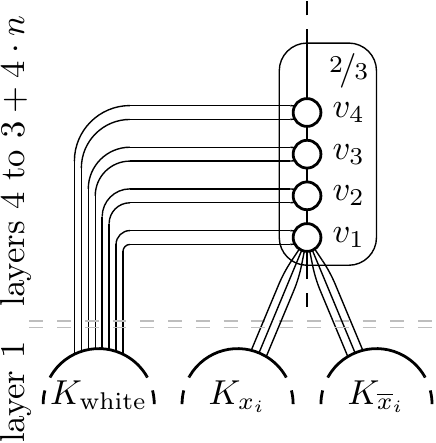}
}
}
\caption{The gates and layers used in the reduction from \textsc{3sat} to \ctdp.}
\end{figure}

Let $\Phi \in $ \textsc{3sat} be a Boolean formula in $3$-conjunctive normal
form. Let $n$ be the number of variables of $\Phi$. Let $m$ be the number of
clauses of $\Phi$. The Boolean formula is of the form
\begin{equation*}
\Phi=(l_{1,1}\vee l_{1,2}\vee l_{1,3})\wedge\cdots\wedge(l_{m,1}\vee l_{m,2}\vee l_{m,3})
\end{equation*}
where $l_{i,j}\in \left\{ x_1, \overline x_1, x_2, \overline x_2,\cdots,x_n,
\overline x_n \right\}$ is a literal for $1\leq i\leq m$ and $1\leq j \leq 3$.

We construct a graph $G$ to simulate the evaluation of $\Phi$ as follows. Let
$\ell = 10\cdot(m + n)+1$. The graph consists of several layers. On the first
layer, we place so-called literal cliques of size $\ell$, and on the layers above we place the
gates. In our reduction, we use \textsc{or}-gates, an \textsc{and}-gate, and
$2/3$-gates. Each gate consists of one or several nodes. Additionally, we have
two so-called \emph{\megaclique{s}} $K_{\text{white}}$ and $K_{\text{black}}$
of size $\ell$.

Let $g$ be an arbitrary but fixed gate. In the following, we will
denote a node on a layer below $g$ that is connected to $g$ as \emph{input
node} to $g$. Additionally, we will denote a node that belongs to $g$ and is
connected to another gate on a layer above $g$ as \emph{output node}.

In the following, we assume that opinion $1$, white, corresponds to Boolean
\textsc{true} and $0$, black, corresponds to \textsc{false}. The main idea of
the construction is to show that an \emph{activation signal} is transmitted
from the bottom up through all layers. If the current assignment of opinions on
the literal cliques corresponds to a satisfying assignment of Boolean values to
$\Phi$, then the process requires $4+4\cdot n$ steps.
The main purpose of the \textsc{or}-gates and the \textsc{and}-gate is to
evaluate $\Phi$. The $2/3$-gates check whether the opinion assignment to
literal nodes is valid. That is, we need to enforce that the corresponding
literal nodes for $x_i$ and $\overline x_i$ are of opposite colors for every
variable $x_i$ of $\Phi$. If either this condition is violated and variables $x_i$
exist for which $x_i = \overline x_i$ or the current assignment of opinions on
the literal cliques does not corresponds to a satisfying assignment of Boolean values to
$\Phi$, the construction enforces that the
process stops prematurely after strictly fewer than $h+1$ steps.

We start by giving a detailed description of the gates and the layers used in our
construction.

\paragraph{Layer $1$ -- Literal Cliques.}
We represent each variable $x_i$ with two cliques, one for $x_i$ and one for
$\overline x_i$. Each clique has a size of $\ell$ which is defined above. Note
that $\ell$ is odd. Additionally, we distinguish three so-called
\emph{representative nodes} in each of these cliques. Furthermore, we add two
cliques of size $\ell$ to the graph which we call \megaclique{s}. Intuitively,
these \megaclique{s} represent the Boolean values \textsc{true} and
\textsc{false}. We will show that they cannot have the same color in order
to achieve a long \contime. The \megaclique{s} are used in all other gates.

\paragraph{Layer $2$ -- Parallel \textsc{or}-Gates.}
The \textsc{or}-gates are placed on layer $2$ and consist of one node $v$ which
is also the output node. 
There is one \textsc{or}-gate for every clause.
Fix a clause $(l_{j,1} \vee l_{j,2} \vee l_{j,3})$.
Input nodes are three pairs of nodes $(v_1$, $v_1')$, $(v_2, v_2')$, and $(v_3,
v_3')$, where $(v_1$, $v_1')$ are two representative nodes of the literal
clique for $l_{j,1}$, $(v_2, v_2')$ are representatives of $l_{j,2}$, and
$(v_3, v_3')$ are representatives of $l_{j,3}$. That is, for each literal in
the clause we connect the \textsc{or}-gate on layer $2$ to two of the three
representative nodes of the corresponding literal clique on layer $1$. The
output node $v$ is additionally connected to $4$ nodes of the
$K_{\text{white}}$ \megaclique. 
Intuitively, we use the \textsc{or}-gates to verify that for each clause at
least one literal is \textsc{true}. All clauses are evaluated simultaneously
using an \textsc{or}-gate for each clause. The \textsc{or}-gate is shown in
\autoref{fig:reduction-or}.

\paragraph{Layer $3$ -- \textsc{and}-Gate.}
There is exactly one \textsc{and}-gate on layer $3$. This \textsc{and}-gate
consists of one output node denoted $u_0$, which has the following input nodes. It is
connected to every output node of the \textsc{or}-gates on layer $2$ and to
$m-2$ distinct nodes of the $K_{\text{black}}$ \megaclique. Intuitively, the
\textsc{and}-gate is used to verify that every clause is satisfied. 

\paragraph{Layers $4$ to $3 + 4\cdot n$ -- $2/3$-Gates.}
The $2/3$-gates consist of a path $v_1$, $v_2$, $v_3$, and $v_4$. Each node of
this path is connected to two distinct nodes of the $K_{\text{white}}$. The
output node of the gate is $v_4$. The node $v_1$ of the first $2/3$-gate on
layer $4$ is connected to the \textsc{and}-gate on layer $3$. The node $v_1$ of
each of the following $2/3$-gates is connected to the node $v_4$ of the previous $2/3$-gate.
Additionally, the input node of the $i$-th $2/3$-gate is connected to three distinct nodes of the
literal clique representing $x_i$ and to three distinct nodes of the literal
clique representing $\overline x_i$ on layer $1$. The output node of the final
$2/3$-gate is connected to $K_{\text{black}}$. This is shown in
\autoref{fig:reduction-two-third}. These gates are used to verify that we do
not have variables $x_i$ in $\Phi$ for which the literal cliques of $x_i$ and $ \overline x_i$ have the same color. Observe that
$2/3$-gates span over $4$ layers and we have $n$ such $2/3$-gates. 

\medskip

Literal cliques, \textsc{or}-gates, and the \textsc{and}-gate use only one
layer, while $2/3$-gates span over $4$ layers. Therefore, the total number of
layers is $h = 3+4\cdot n$, which results from one layer for the literal
cliques, one layer for the \textsc{or}-gates, one layer for the
\textsc{and}-gate, and $4\cdot n$ layers containing $n$ concatenated
$2/3$-gates. 
Based on above description of $G$ we prove the
following lemmas, which are then used to show \autoref{thm:hardness}.

\begin{lemma} \label{lem:claim1}
If $\Phi$ is satisfiable, then there exists an assignment of opinions such that
the \contime in $G$ is at least $h+1$.
\end{lemma}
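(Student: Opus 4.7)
The plan is to convert a satisfying assignment $\sigma$ of $\Phi$ into an initial opinion assignment $f_0$ whose trajectory requires at least $h+1$ rounds to reach two-periodicity. I would color each literal clique by $\sigma$ (the $x_i$-clique white iff $\sigma(x_i)$ is true, and the $\overline{x_i}$-clique the opposite color), color $K_{\text{white}}$ entirely white and $K_{\text{black}}$ entirely black, and pick a canonical initial coloring of the gate nodes on layers $2$ through $h$ so that an activation signal is injected at the bottom of the stack. The first sanity check is that literal cliques and \megaclique{s} stay frozen throughout the process: the choice $\ell=10(m+n)+1$ gives every clique vertex $\ell-1$ monochromatic clique-neighbors against only $O(1)$ external gate-neighbors, so its color is invariant in every round.

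The heart of the argument is a layer-by-layer propagation lemma. At round $1$ each \textsc{or}-gate flips to white, since of its $10$ neighbors at least $2$ come from a satisfied literal's representatives (white) and $4$ come from $K_{\text{white}}$, beating the at most $4$ remaining blacks. At round $2$ the \textsc{and}-gate $u_0$ flips to white: all $m$ \textsc{or}-outputs are white at round $1$, against only $m-2$ black neighbors from $K_{\text{black}}$. For each of the $n$ concatenated $2/3$-gates, the lazy tie-breaking rule of \autoref{def:majority-voting} is essential: the $3{+}3$ literal-attachments at the input node $v_1$ of the $i$-th gate split exactly $3$ white and $3$ black under a \emph{valid} assignment, so together with the two white $K_{\text{white}}$-neighbors the input sits in a $5$-to-$5$ tie and remains frozen until its external input flips, after which the change ripples along the internal path $v_1, v_2, v_3, v_4$ at one hop per round. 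Each $2/3$-gate thus contributes $4$ rounds to the cascade.

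Summing the layer contributions, the cascade reaches the top of the stack at round $h+1$, yielding some vertex $v$ with $f_{h-1}(v)\neq f_{h+1}(v)$; by the definition of $\ctime$ this forces $\ctime(G,f_0)\geq h+1$, as claimed.

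The main obstacle will be the detailed round-by-round bookkeeping inside each $2/3$-gate and at the two ends of the stack. One has to verify (i) that the lazy rule really keeps each path node tied until its predecessor flips, so that the cascade is monotone and is not corrupted by spurious flips; (ii) that the $3$-to-$3$ literal split at every input $v_1$ is maintained at every round, which is exactly where the assumption of a \emph{valid} assignment (cliques $x_i$ and $\overline{x_i}$ oppositely colored) enters; and (iii) that the two boundary couplings, namely the \textsc{and}-gate feeding $v_1$ of the first $2/3$-gate and $v_4$ of the last $2/3$-gate feeding $K_{\text{black}}$, together deliver the correct number of additional rounds to reach exactly $h+1$. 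The remainder reduces to routine degree counting using the gap between $\ell$ and the $O(m+n)$ external degrees of clique vertices.
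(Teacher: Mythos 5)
Your overall architecture matches the paper's (monochromatic \megaclique{s}, \textsc{or}-gates firing off satisfied literals plus $K_{\text{white}}$, the \textsc{and}-gate firing off all $m$ \textsc{or}-outputs against $m-2$ black neighbors, and the tie-based one-hop-per-round cascade through the $2/3$-gates), but there is a genuine quantitative gap: your construction does not reach \contime{} $h+1$. You initialize each literal clique monochromatically according to $\sigma$, so the cliques are stable from round $0$ and, by your own account, the \textsc{or}-gates flip at round $1$ and the \textsc{and}-gate at round $2$; the $n$ concatenated $2/3$-gates then contribute $4$ rounds each, so the last opinion change occurs at round $2+4n = h-1$, not $h+1$ as you assert. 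Since \autoref{lem:claim2} must simultaneously show that unsatisfiable formulas never reach $h+1$, a witness achieving only $h-1$ does not complete the reduction; the summation step in your third paragraph is simply an arithmetic error, not a fixable bookkeeping detail of the same construction.

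The idea you are missing is the paper's deliberately \emph{non}-monochromatic initialization of the positive literal cliques, which buys exactly the two missing rounds at the bottom of the stack: color $\floor{\ell/2}$ nodes of each positive clique black, \emph{including all three representative nodes}, and the remaining $\ceil{\ell/2}$ nodes white (negative cliques entirely black). White is then the strict majority inside a positive clique, so in round $1$ the non-representative black nodes flip to white while the representatives (still outvoted by the internal white majority, which dominates their $O(m)$ external black neighbors by the choice of $\ell$) flip only in round $2$. The \textsc{or}-gates therefore fire at round $3$, the \textsc{and}-gate at round $4$, and the cascade finishes at round $4+4n = h+1$. Everything else in your write-up (the $5$-to-$5$ tie at each $v_1$ under a valid assignment, the role of laziness, the degree counting against $\ell$) is consistent with the paper's argument once this delayed-clique initialization is inserted.
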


\begin{proof}
Let $A = (a_1, \dots, a_n)$ be an assignment of Boolean values to the $n$
variables in $\Phi$ which satisfies $\Phi$. We need to show that
there exists an opinion assignment on $G$ for which the \rd requires at least
$h+1$ steps to \conv. In the following, we construct such an opinion assignment.

Let $f_A$ be an initial opinion assignment in the graph $G$ that represents $A$
by initializing the nodes in the literal cliques on layer $1$ according to the
assignment $A$ as follows. For each literal $x_i$ or $\overline x_i$, $a_i$
assigns either \textsc{true} or \textsc{false} to the literal. We denote a
literal $x_i$ or $\overline x_i$ which is assigned \textsc{true} as
\emph{positive} and literals which are assigned \textsc{false} as
\emph{negative}. For the positive literal cliques, we assign the color black to
$\floor{L/2}$ nodes including the representative nodes of the clique. The
remaining $\ceil{L/2}$ nodes, which do not have any other connections except
within the literal clique, are colored white. Negative literal cliques are
colored entirely black. Furthermore, we initialize all nodes of the
$K_{\text{white}}$ and the $K_{\text{black}}$ with white and black,
respectively. All other nodes, the paths $v_1$ to $v_4$ in the $2/3$-gates, the
output nodes of the \textsc{or}-gates, and the output node of the
\textsc{and}-gate, are colored black.

\begin{figure}
\centering
\begin{subfigure}{0.2\textwidth}
\centering
\includegraphics[page=1]{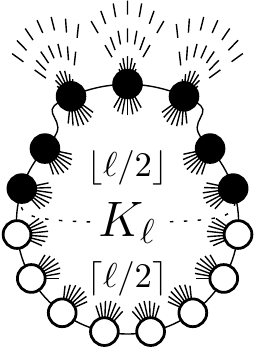}
\caption{$t = 0$}
\end{subfigure}
\begin{subfigure}{0.2\textwidth}
\centering
\includegraphics[page=2]{figures/clique-initialization}
\caption{$t = 1$}
\end{subfigure}
\begin{subfigure}{0.2\textwidth}
\centering
\includegraphics[page=3]{figures/clique-initialization}
\caption{$t = 2$}
\end{subfigure}
\caption{The behavior of the cliques on layer $1$. The three top nodes are the representatives.}
\label{fig:clique-behavior}
\end{figure}

The process now behaves as follows.
\begin{enumerate}
\item In the first step, all black nodes in every positive literal clique
except the representative nodes turn white, since they have $\ceil{L/2}$
white neighbors and only $\floor{L/2}-1$ black neighbors.
\item The representative nodes of the literal cliques will turn white in the
following step. This behavior of the cliques on layer $1$ is shown in
\autoref{fig:clique-behavior}.
\item Additionally to the neighbors
in $K_{\text{white}}$, all \textsc{or}-gates on layer $2$ will have at least two white input
nodes from representing at least one positive literal clique, since $A$
satisfies $\Phi$. Therefore, the \textsc{or}-gates will turn white in step
$3$.
\item Once all \textsc{or}-gates become white, the \textsc{and}-gate has a
total of $m$ white input nodes that form a clear majority against the $m-2$
edges to black nodes in $K_{\text{black}}$ and the $1$ edge to the black node
of the first $2/3$-gate. Therefore, the \textsc{and}-gate turns white in step
$4$.
\item In the following $4\cdot n$ steps, node after node and gate after gate
the $2/3$-gates turn white. Once all nodes of the $2/3$ gates have turned
white, the process stops.
\end{enumerate}
Summing up over all of the above steps, the \contime w.r.t.\ the initial
opinion assignment $f_A$ is exactly $\ctime\left(G(\Phi), f_A\right) = 4 +
4\cdot n = h + 1$. Therefore, the \dtime in $G(\Phi)$ for a satisfiable $\Phi$
is at least $h + 1$, which yields the lemma.
\end{proof}

It remains to show that if $\Phi$ is not satisfiable, then the \dtime
in $G$ is strictly less than $h+1$. Recall that the \dtime is the maximum of the
\contime over all possible initial opinion assignments.
	
\begin{lemma} \label{lem:claim2}
If $\Phi$ is not satisfiable, then there is no assignment of opinions such that
the \contime in $G$ is at least $h+1$.
\end{lemma}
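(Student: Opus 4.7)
The plan is to prove the contrapositive: if some initial opinion assignment $f_0$ on $G = G(\Phi)$ satisfies $\ctime(G, f_0) \geq h+1$, then $\Phi$ is satisfiable. By the definition of $\ctime$, this hypothesis gives a vertex $v^{\star}$ with $f_{h+2}(v^{\star}) \neq f_h(v^{\star})$, and the plan is to trace the causal chain of such late transitions backward through the layered structure of $G$ to extract a satisfying assignment for $\Phi$.

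First I would show that the two mega-cliques and all $2n$ literal cliques are essentially static after a small constant number of rounds. Each such clique has odd size $\ell = 10(m+n)+1$, chosen so that $\ell$ dominates the total number of external edges leaving the clique; a direct calculation then shows that within a bounded number of rounds every clique becomes monochromatic on its interior (the $\ell - O(m+n)$ nodes with no external neighbors), that this interior color equals the initial internal majority of the clique, and that it never changes afterwards. In particular, the colors of $K_{\text{white}}$, $K_{\text{black}}$, and each literal clique become fixed parameters governing the behavior of the gates above, while any residual activity at the few boundary nodes is determined by those parameters.

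Next I would argue by downward induction on layers that for a late transition to be possible at the output of the topmost $2/3$-gate, a matching transition must have occurred at the output of the previous $2/3$-gate one round earlier, and so on down through the chain of $2/3$-gates, the \textsc{and}-gate at layer $3$, the \textsc{or}-gates at layer $2$, and the literal cliques at layer $1$. At each gate, the threshold computation is precisely the one already used in the forward direction in the proof of Lemma \ref{lem:claim1}, and it pins down what the stable clique colors must be: $K_{\text{white}}$ is white, $K_{\text{black}}$ is black, the clique of $x_i$ has the color opposite to that of the clique of $\overline{x}_i$ for every $i$ (so that the $i$-th $2/3$-gate can fire at its prescribed round), and every \textsc{or}-gate receives at least one white literal input (so that it can fire white in time for the \textsc{and}-gate).

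Taken together, the last two conditions yield a consistent satisfying assignment for $\Phi$ by setting $x_i = \textsc{true}$ exactly when the literal clique of $x_i$ is white, contradicting unsatisfiability. I expect the main obstacle to be the first step: one must carefully rule out exotic initial clique configurations in which a clique is not monochromatic but instead supports a long-lived boundary oscillation that feeds the gates above with signals mimicking the canonical propagation of Lemma \ref{lem:claim1} without corresponding to a valid Boolean assignment. Once the cliques are pinned down, the downward propagation through the gates is a direct, if somewhat tedious, check using the majority calculations already carried out in the construction.
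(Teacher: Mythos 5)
Your proposal is, up to taking the contrapositive, the same argument as the paper's: the cliques stabilize within a constant number of rounds (the paper's \autoref{lem:stabletimes}), and a round-$(h+1)$ opinion change can only be sustained by a chain of changes propagating one node per round along the path formed by the \textsc{and}-gate and the $2/3$-gates, whose existence forces the mega-cliques to be oppositely colored, each pair $x_i,\overline{x}_i$ to be oppositely colored, and every \textsc{or}-gate to see a white literal --- i.e., a satisfying assignment. The one step your sketch asserts rather than argues is that this causal chain must run \emph{bottom-up}, from the \textsc{and}-gate firing at round $4$ to the top of the path at round $h+1$. A priori the chain could instead originate at the top: the output node $u_k$ of the last $2/3$-gate is adjacent only to stable clique nodes and to $u_{k-1}$, so it can be activated by the stabilized mega-cliques as early as round $3$ and push a wave downward; that wave reaches $u_0$ only at round $3+4n=h$, one round too early, and it yields no information whatsoever about the \textsc{or}-gates, so your extraction of a satisfying assignment would have nothing to work with in that case. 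Ruling this out --- together with ruling out chains that bounce back and forth along the path, which needs the fact that each $u_i$ with $i\geq 1$ flips at most once after round $3$ --- is precisely the content of the paper's \autoref{obs:onlyonce} and \autoref{lem:acttooearly}, and it is where the off-by-one bookkeeping between the bottom of the gadget (stable at round $3$) and the top (stable at round $2$) actually matters; your plan can absorb this, but it needs to be made explicit rather than folded into the ``tedious check.''
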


Before we prove this lemma, we establish several auxiliary lemmas which require
the following definitions. Let $u_0$ denote the output node of the
\textsc{and}-gate. Consider the graph $G'$ induced by the nodes of the
\textsc{and}-gate and the nodes of the $2/3$-gates. Let $u_i$ be the node at
distance $i$ to $u_{0}$ in $G'$. That is, $G'$ is a path that consists of the
$4\cdot n+1$ top layers of the graph $G$ and $u_i$ is the $i$-th node on this
path.

\begin{definition}[Stable Time]
We define the \emph{stable time} $s(v)$ for any node $v \in V$ to be the first
time step such that $v$ does not change its opinion in any subsequent time step
$t' > s(v)$ over all possible initial configurations. That is, 
\[ s(v)=\min\left\{ t : \forall f_0 \in \left\{0, 1 \right\}^V ~ \forall t' \geq t \quad f_{t'}(v) = f_t(v) \right\} \enspace . \]
Accordingly, let for any subset $V' \subseteq V$ be $s(V')$ defined as $s(V') =
\max \left\{ s(v) : v \in V'\right\}$.
\end{definition}

\providecommand\Krep{{K^{\text{\textsc{r}}}}}
\providecommand\Kminus{{K^-}}
\providecommand\Kor{{\text{\textsc{or}}}}

In the following, let $V_K$ be the set of nodes of all cliques in $G(\Phi)$,
that is, the nodes contained in the literal cliques and in the \megaclique{s}
on layer $1$. Furthermore, let $V_\Krep$ be the set of representatives of the
cliques and $V_\Kminus = V_K \setminus V_\Krep$. That is, every clique $K$ on
layer $1$ consists of $\Kminus \cup \Krep$. Finally, let $V_\Kor$ be the set of
all output nodes of \textsc{or}-gates. The following lemma shows that the
layers become stable one after the other.



\begin{lemma} \label{lem:stabletimes}
It takes at most $3$ time steps for the layers $1$ and $2$ consisting of
literal cliques and \textsc{or}-gates to become stable. Precisely, we have
\begin{inparaenum}[1)]
\item $s\left(V_\Kminus\right) = 1$,
\item $s\left(V_\Krep\right) = 2$, and
\item $s\left(V_\Kor\right) = 3$.
\end{inparaenum}
\end{lemma}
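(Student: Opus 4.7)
My plan is to establish the three equalities by a layered cascade: once layer $1$ has stabilized (non-representatives first, then representatives), the \textsc{or}-gate outputs become a function of a frozen input and settle in the very next round. Throughout, I will exploit the two structural facts that every layer-$1$ clique has odd size $\ell = 10\cdot(m+n)+1$ and that the external edges incident to each clique are few compared to $\ell$.

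For $s(V_\Kminus) = 1$, I would fix a non-representative node $v$ in a clique $K$. For literal cliques, only the three representatives are joined to higher layers, so $v$'s neighborhood is exactly $K \setminus \{v\}$; for the \megaclique{s}, the $\BigO{m+n}$ external endpoints fit comfortably on distinguished ``representative'' nodes since this count is below $\ell$, and so again a non-representative $v$ has no external neighbors. Since $\ell$ is odd, $K$ has a unique majority color $c_K$ at time $0$, and a one-step majority computation (splitting on whether $v$'s own color equals $c_K$ and invoking the lazy tie-break rule in the tight case) shows that $v$ takes color $c_K$ at time $1$. Once at least $\ell - \BigO{1}$ nodes of $K$ carry $c_K$, this internal mass strictly dominates the remaining internal representatives together with any external influence, so $v$ stays at $c_K$ in every subsequent round.

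For $s(V_\Krep) = 2$, I would fix a representative $r$ of a clique $K$. By Part $1$, at time $1$ all $\ell - \BigO{1}$ non-representatives of $K$ carry color $c_K$. Node $r$ has $\ell - 1$ internal neighbors plus at most $\BigO{m+n}$ external neighbors, and the internal $c_K$-mass strictly exceeds the total count of the other representatives of $K$ plus all external neighbors by the choice of $\ell$; hence $r$ adopts $c_K$ at time $2$ and stays there thereafter. For $s(V_\Kor) = 3$, an \textsc{or}-gate output $o$ is adjacent only to six literal-clique representatives and four nodes of $K_{\text{white}}$, all of which lie in $V_K$; by the first two parts these neighbors are stable by time $2$, so $o$'s input is frozen from time $2$ onward and $o$ itself stabilizes at time $3$. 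The matching lower bounds for the three equalities are obtained by straightforward explicit initial assignments that force a non-representative to flip at round $1$, a representative to flip at round $2$, and an \textsc{or}-gate output to flip at round $3$.

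The main obstacle is the accounting in Part $1$ around the tight-majority case $|w-b|=1$ inside a clique: I must argue that even in this delicate situation the lazy tie-break rule together with the absorption of all external connections into the representative set is enough to keep every non-representative at color $c_K$. Once this bookkeeping is set up, Parts $2$ and $3$ reduce to checking the inequality $\ell - \BigO{1} > \BigO{m+n}$, which follows immediately from $\ell = 10\cdot(m+n)+1$.
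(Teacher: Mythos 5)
Your Parts 1 and 2 match the paper's argument: the odd clique size $\ell$ forces a unique majority colour, which every non-representative adopts (or keeps, by laziness, in the tight case) after one step, and the representatives then follow at step $2$ because a majority of their neighbours lie among the now-monochromatic non-representatives. The genuine gap is in Part 3. An \textsc{or}-gate output node is \emph{not} adjacent only to the six literal representatives and four nodes of $K_{\text{white}}$: by construction the \textsc{and}-gate node $u_0$ on layer $3$ is connected to every \textsc{or}-gate output, so each such output has an eleventh neighbour, namely $u_0$, and $u_0$ is not stable at time $2$ (its whole purpose is to change colour around time $4$). Hence the input to the \textsc{or}-gate output is not frozen from time $2$ onward, and the inference ``all neighbours stable at $2$, therefore stable at $3$'' does not go through; a priori the output could keep flipping together with $u_0$. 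The paper closes exactly this hole with a parity argument: the ten stable neighbours consist of three pairs of representatives, each pair monochromatic from time $2$ on by Part 2, together with four nodes of the monochromatic $K_{\text{white}}$, so the numbers of stable black neighbours and stable white neighbours are both even; since they sum to $10$ they cannot split $5$--$5$, the margin among the stable neighbours is at least $2$, and the single unstable vote of $u_0$ can never overturn it. You need this (or an equivalent) observation to conclude the third equality.

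A smaller bookkeeping issue: for the two mega-cliques the number of nodes carrying external edges is $\Theta(m+n)$ rather than $\BigO{1}$, so ``at least $\ell-\BigO{1}$ nodes of $K$ carry $c_K$'' is too optimistic there. You still have to verify that the externally connected nodes of a mega-clique are few enough (or shared enough) that the internal monochromatic mass outvotes them in every later round, which is precisely what the choice $\ell=10\cdot(m+n)+1$ is meant to guarantee.
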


\begin{proof}
{
\def\itemautorefname~#1\null{(#1)\null}
The lower bounds for all three claims follow from the initial opinion
assignment $f_A$ described in the proof of \autoref{lem:claim1}. We now show
the upper bounds. In the following, let $f_0$ be an arbitrary but fixed initial
opinion assignment.
\begin{enumerate}[{\bfseries (i)}]
\item \label{item:stable-case-1}
Let $K$ be an arbitrary but fixed clique and let $c \in \{ 0,1 \}$ be the
majority color among the nodes of $K$. Let furthermore $\Kminus$ be the set of
clique nodes that do not have connections to any other node except within the
clique, that is, $\Kminus$ contains all clique nodes except representatives.
Note that all nodes in $\Kminus$ only have connections to all other nodes in
$K$. Since $K$ is odd and $c$ is the majority color in $K$, each node $v \in
\Kminus$ with $f_0(v) = c$ will have at least $\floor{\ell/2}$ neighbors out of
a total of $\ell-1$ neighbors colored $c$. Therefore, each node $v \in
\Kminus$ with $f_0(v) = c$ will keep its color $c$ such that $f_1(v) = c$.
However, all other nodes $v' \in \Kminus$ with $f_0(v') \neq c$ will change
their opinion to $c$, since they have at least $\ceil{\ell/2}$ neighbors out of
a total of $\ell-1$ neighbors colored $c$, such that $f_1(v')=c$.

By construction and by the size of the clique, $\ell$, all nodes $v \in
\Kminus$ have more neighbors in $\Kminus$ than in $V \setminus \Kminus$.
Therefore, for all consecutive steps $t' \geq 1$, we have for any $v \in
\Kminus$ that $f_{t'}(v) = c$. This holds for all cliques on layer $1$, and thus
$s\left(V_\Kminus\right) \leq 1$.

\item \label{item:stable-case-2}
Let $K$ be an arbitrary but fixed clique and let $v \in \Krep$ be a
representative node of $K$. By construction, $v$ has a majority of its
neighbors in $\Kminus$ and hence from \autoref{item:stable-case-1} we derive
$s(v) \leq 2$. Therefore, $s\left(V_\Krep\right) \leq 2$. We also observe that
all nodes in $\Krep$ have the same color after the second step, since the nodes
in $\Kminus$ become monochromatic in the first step and these nodes dominate
the behavior of the nodes in $\Krep$.

\item \label{item:stable-case-3}
Let $v$ be the output node of an arbitrary but fixed \textsc{or}-gate in
$V_\Kor$. We observe that all neighbors of $v$ except for one neighbor  (the
node of the \textsc{and}-gate $u_0$) are stable for any time step $t' \geq 2$.
By \autoref{item:stable-case-2}, at time $2$ all representatives of any literal
$x_i$ have the same color and $K_{\text{black}}$ is stable. Therefore, at time
$2$ an even number of neighbors of $v$ are black and an even number is white.
Since the total number of neighbors of $v$ is $10$, we observe that $u_0$
cannot influence $f_{t'}(v)$ for $t'\geq 2$. Moreover, by
\autoref{item:stable-case-1} and \autoref{item:stable-case-2} we have at time
$t' \geq 2$ that the majority of neighbors having color $c$ does not change and
therefore $v$ becomes stable at time $3$. Thus $s\left(V_\Kor\right) \leq 3$
holds. \qedhere
\end{enumerate}
}\end{proof}

The above lemma gives  bounds on the stable time of layers $1$ and $2$. In
the following, we argue that whenever a node changes its opinion in any step
$t$ after time step $3$, it will not change its color in any subsequent time
step $t' \geq t$ any more. We therefore define the so-called \emph{activation
time} of a node $v \in G'$ as follows.

\begin{definition}[Activation Time]
Let $c$ be the color of the $K_{\text{black}}$ \megaclique at time $2$ and let
$f_0$ be an arbitrary but fixed initial opinion assignment. We define the
\emph{activation time} of a node $v \in G'$ to be the first time step after
time step $3$ in which the node $v$ adopts opinion $c$. That is,
$a(v) = \min\left\{t \geq 3 : f_t(v)=c\right\}$.
If $v$ does not change its color after time step $3$ we write $a(v)=3$.
\end{definition}

We now use above definition to state the following lemma, which describes that
every node $u_i \in G'$ with $i \geq 1$ changes its color at most once after
time step $3$. Note that this covers the nodes of the $2/3$-gates.

\begin{lemma}\label{obs:onlyonce}
Let $f_0$ be an arbitrary but fixed initial opinion assignment. Let 
$t$ be the activation time w.r.t.\ $f_0$ of the node $u_i \in G'$ with $i \geq
1$ such that $t = a(u_i)$. Then for all $t' \geq t$ we have
$f_{t'}(u_i)=f_{t}(u_i)$.
\end{lemma}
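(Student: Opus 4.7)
The plan is to prove \autoref{obs:onlyonce} by case analysis on the local structure of $u_i$ inside its 2/3-gate. By \autoref{lem:stabletimes}, every neighbor of $u_i$ that lies outside the path $G'$ (namely every neighbor in a literal clique or in one of the mega-cliques $K_{\text{white}},K_{\text{black}}$) is stable from time $2$ on. Consequently, for any $t\ge 3$, $u_i$ has at most two non-stable neighbors: its one or two path neighbors in $G'$.

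I partition the nodes $u_i$ with $i\ge 1$ into three types according to their position in a 2/3-gate. Type~(A): an internal node $v_2$ or $v_3$, or the output node $v_4$ of a non-final 2/3-gate; each has degree $4$, with two stable $K_{\text{white}}$-neighbors and two path neighbors. Type~(B): an input node $v_1$ of a 2/3-gate, of degree $10$, with eight stable neighbors (two in $K_{\text{white}}$ and three in each literal clique for the corresponding $x_k,\overline{x}_k$) and two path neighbors. Type~(C): the output node $v_4$ of the final 2/3-gate, which behaves like Type~(A) but additionally carries several stable neighbors in $K_{\text{black}}$.

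For each type I will carry out a degree-based majority count. Letting $s_c$ denote the number of stable neighbors of $u_i$ having color $c$, the core inequality is: whenever $u_i=c$ at some time $t\ge 3$, the at most two non-stable neighbors cannot create a strict $\neg c$-majority as long as $s_c\ge \lceil d/2\rceil$ (a $s_c=d/2$ tie in the even-degree case also preserves $u_i$'s color). This gives $f_{t+1}(u_i)=c$, and iterating yields $f_{t'}(u_i)=c$ for every $t'\ge t$. For Type~(A) this requires $c_w=c$ (the two $K_{\text{white}}$-neighbors already force a 2-2 tie); for Type~(B) it is enough that at least one of the two literal cliques stabilizes at $c$, since then $s_c\ge 3$ and a $\neg c$-majority would need $\ge 6$ out of the remaining $5+2$ slots, which cannot occur; for Type~(C) the additional $K_{\text{black}}$-neighbors always contribute color $c$, so the count suffices in every configuration.

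The main technical obstacle is the handful of corner configurations in which the direct count falls short, namely Type~(A) with $c_w\neq c$, Type~(B) with both literal cliques of $x_k,\overline{x}_k$ stabilizing at $\neg c$, and the mirror situations. In each such case I will show that the activation premise is itself unreachable: for $u_i$ to equal $c$ at any $t\ge 3$, the majority of $u_i$'s neighbors at time $t-1$ would have to be $c$, but the large number of stable $\neg c$-neighbors combined with the bound of at most two non-stable neighbors makes this strict $c$-majority impossible (and the corresponding tie cases can be traced back one step to the same impossibility at time $2$). Hence $u_i$ never attains color $c$ after time $3$, the activation time takes its default value $3$, and the conclusion $f_{t'}(u_i)=f_{3}(u_i)$ holds trivially.
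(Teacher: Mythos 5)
Your overall strategy (stable outside-neighbors from \autoref{lem:stabletimes}, then a per-node majority count by position in the $2/3$-gate) matches the paper's, but two of your counts do not go through. First, for the input nodes $v_1$ (your Type~(B)) it is \emph{not} enough that one of the two literal cliques stabilizes at $c$: that gives only $3$ stable $c$-neighbors out of degree $10$, and the remaining $7$ slots (the other clique's $3$ representatives, the two $K_{\text{white}}$-neighbors, and the two path neighbors) can supply the $6$ opposite-colored neighbors needed for a strict $\neg c$-majority, so your claim that this ``cannot occur'' is an arithmetic error. The paper instead splits on whether the two literal cliques agree: if they agree, their $6$ common-colored edges dominate the degree-$10$ neighborhood and freeze $v_1$ outright; if they disagree, the $3+3$ edges cancel and $v_1$ reduces exactly to the balanced $2$-versus-$2$ situation of the internal nodes, where the decisive fact is that half of the effective neighborhood sits in the stable $K_{\text{white}}$.

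Second, your escape route for the corner configurations --- ``the activation premise is itself unreachable'' --- fails. A node can satisfy $f_3(u_i)=c$ without ever having been pushed to $c$ by a strict majority: the process is lazy, so $u_i$ may simply hold its initial opinion through ties during rounds $1$ to $3$ while its neighborhood is still settling. Hence ruling out a strict $c$-majority at times $t\ge 3$ does not rule out $f_3(u_i)=c$, and tracing the tie back to time $2$ gives nothing, because before time $2$ the clique neighbors are not yet stable and no impossibility arises. The paper's proof avoids this entirely by arguing in the opposite direction: it never tries to show that a color is unattainable, but shows that after time $3$ the only color change a path node can undergo is \emph{into} the color of $K_{\text{white}}$ (exactly half of its effective neighborhood lies in the stable $K_{\text{white}}$, and laziness then pins it there), so each $u_i$ flips at most once and is constant from its activation time on. You would need to restructure your corner cases along these lines rather than via unreachability.
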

\begin{proof}
By \autoref{lem:stabletimes}, all nodes $u \in V_\Krep$ are stable at $t' \geq
2$. We now distinguish two cases.
\paragraph{Case 1: $i \mod 4 \neq 1$.}
Observe that $u_i$ can only change its color at time $t = a(u_i)$, if it had a
different color than $K_{\text{white}}$ in the previous round. This holds,
since every node $u_i$ with $i \mod 4 \neq 1$ has the same number of
connections to $K_{\text{white}}$ than to nodes in $V \setminus
K_{\text{white}}$. Since furthermore the process behaves lazy, any node $u_i$
which has the same color as $K_{\text{white}}$ cannot change its opinion back to
the opposite color any more.
\paragraph{Case 2: $i \mod 4 = 1$.}
The node $u_i$ is a $v_1$ node of the $j$-th $2/3$-gate with $j = \ceil{i/4}$.
Therefore it is connected to three representatives of each literal clique for
$x_j$ and $\overline x_j$. The literal representatives of $x_j$ and $\overline
x_j$ are stable at time $t' \geq 2$. Now if $x_j$ and $\overline x_j$ have the
same color $c$, then $u_i$ has $6 > |N(u_i)|/2$ edges to nodes of color $c$.
Therefore, the node does not change its color any more after time step $3$.
That is, we have $a(u_i)=3$ and also $f_{t'}(u_i)=c$ for any consecutive time
step $t' \geq 3$. If, however, $x_j$ and $\overline x_j$ do not have the same
color, these edge \emph{cancel} each other out and the color of node $u_i$ is
determined by $u_{i-1}$, $u_{i+1}$, and $K_{\text{white}}$. Therefore, the same
argument as in the first case holds. 
\end{proof}

In the following we examine the behavior of layer $3$ which contains only of
the \textsc{and}-gate. Recall that $u_0$ is the output node of the
\textsc{and}-gate. The next lemma describes the following fact. The
\textsc{and}-gate $u_0$ can only change its color in a time step $t \geq 4$ if
$u_1$ changed its color in time step $t-1$. After this change at time $t$, the
node $u_0$ cannot change its color again.

\begin{lemma}\label{lem:stable-and}
Let $f_0$ be an arbitrary but fixed initial opinion assignment and let
furthermore $t$ be the round after node $u_1$ has been activated such that $t =
a(u_1) + 1$. For all consecutive rounds $t' \geq t$ we have $f_{t'}(u_0) =
f_{t}(u_0)$. That is, the \textsc{and}-gate does not change its opinion any
more once the node $u_1$ has become stable.
{\normalfont \hfill (see \autoref{apx:np-hardness})}

%
%
\end{lemma}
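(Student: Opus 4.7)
The plan is to reduce this lemma to a direct consequence of the structural results already established, namely \autoref{lem:stabletimes} and \autoref{obs:onlyonce}. The key observation is that the neighborhood of $u_0$ breaks up into three parts: the $m$ output nodes of the \textsc{or}-gates on layer $2$, the $m-2$ designated nodes of $K_{\text{black}}$, and the single node $u_1$ on layer $4$. Once every neighbor of $u_0$ is stable, $u_0$ is updated by the same rule from one round to the next, so its opinion can change at most once more and then is frozen.

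First, I would invoke \autoref{lem:stabletimes} to pin down that the $K_{\text{black}}$ neighbors of $u_0$ are stable at all times $t' \geq 2$ and the \textsc{or}-gate outputs are stable at all times $t' \geq 3$. Next, by \autoref{obs:onlyonce} applied to $u_1$, the remaining neighbor $u_1$ satisfies $f_{t'}(u_1) = f_{a(u_1)}(u_1)$ for every $t' \geq a(u_1)$. Since $a(u_1) \geq 3$ by the definition of activation time, taking $t - 1 = a(u_1)$ ensures that the entire neighborhood $N(u_0)$ is stable from round $t-1$ onward.

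From that point the update rule applied at $u_0$ sees exactly the same multiset of neighbor opinions in every round $t' \geq t$, and therefore produces the same value $f_t(u_0)$ regardless of how many further rounds elapse. This gives $f_{t'}(u_0) = f_t(u_0)$ for all $t' \geq t$, as required.

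The only thing that requires some care is the base case, i.e.\ making sure that at time $a(u_1)$ the other neighbors of $u_0$ are already stable; this is guaranteed because $a(u_1) \geq 3 \geq s(V_\Kor) \geq s(V_K)$. There is no genuine obstacle: the lemma is essentially a bookkeeping argument that collects the already established stability statements and feeds them into the deterministic update rule at $u_0$.
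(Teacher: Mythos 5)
Your proof is correct, but it takes a genuinely different route from the paper's. The paper proves \autoref{lem:stable-and} by a three-way case analysis on how many of the $m$ \textsc{or}-gate outputs carry the color $\overline c$ opposite to $K_{\text{black}}$: if at most $m-2$ or all $m$ do, the stable layer-$1$/layer-$2$ neighbors already give $u_0$ a strict majority ($2+(m-2)$ resp.\ $m$ out of $2m-1$), fixing its color from round $4$ onward irrespective of $u_1$; only in the balanced case of exactly $m-1$ activated \textsc{or}-gates do the $m-1$ versus $m-1$ stable neighbors cancel, making $u_0$ a delayed copy of $u_1$, whereupon \autoref{obs:onlyonce} is invoked. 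You instead give a uniform argument: all of $N(u_0)$ is frozen from round $a(u_1)\geq 3$ onward (the cliques and \textsc{or}-gates by \autoref{lem:stabletimes}, $u_1$ by \autoref{obs:onlyonce}), and a node whose entire neighborhood is frozen has a frozen output one round later. This is shorter, avoids the counting, and generalizes beyond this particular gate. The one detail you leave implicit is the tie case of the lazy update rule, where the output depends on $u_0$'s own previous opinion rather than only on the neighborhood multiset; this is harmless here because $|N(u_0)|=m+(m-2)+1=2m-1$ is odd so no tie can occur, and even with a tie a one-line induction from $f_t(u_0)$ closes the gap. What the paper's version buys in exchange for its length is the slightly stronger conclusion that in the unbalanced cases $u_0$ is already fixed from round $4$, but that extra strength is not used elsewhere.
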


The following lemma implies that in order to reach a \contime of $h+1$ the
gates on the path $u_0, \dots, u_k$ in $G'$ have to activate one after the
other starting with $u_0$ at time $4$. Recall that $k = 4 \cdot n$.

\begin{lemma}\label{lem:acttooearly}
Let $f_0$ be an arbitrary but fixed initial opinion assignment and let $u_i \in
G'$ be a node with $0\leq i \leq k$. If $a(u_i)< i+4$ w.r.t.\ $f_0$, then
$\ctime\left(G(\Phi), f_0\right) < h+1$.
{\normalfont \hfill (see \autoref{apx:np-hardness})}
\end{lemma}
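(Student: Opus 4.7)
The plan is to show that any early activation on the path $u_0, \ldots, u_k$ cascades forward at the maximum rate of one node per round, so that if $a(u_i) < i+4$ then the final node $u_k$ is activated (and the whole graph has become stable) strictly before time $h+1 = k+4$.

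First I will establish a single-step propagation claim: once $u_j$ is activated at time $t_j$, the neighbor $u_{j+1}$ is activated by time $t_j + 1$. The neighborhood of $u_{j+1}$ splits into pieces that are already frozen by time $3$: the literal representatives and \textsc{or}-gate outputs are stable by \autoref{lem:stabletimes}, the attached nodes of $K_{\text{white}}$ and $K_{\text{black}}$ are trivially stable, and by \autoref{obs:onlyonce} each path node flips color at most once after time $3$. There are two cases depending on the position of $u_{j+1}$ within its $2/3$-gate. In the generic ($v_2,v_3,v_4$) case, $u_{j+1}$ has only two neighbors in $K_{\text{white}}$ and the two path neighbors $u_j,u_{j+2}$, and the majority tips as soon as $u_j$ adopts color $c$. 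In the $v_1$ case the six edges to the representatives of $x_j,\overline x_j$ either cancel each other out (opposite colors) — reducing to the generic count — or they unanimously force $a(u_{j+1})=3$ directly, so propagation is again at most one step.

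Given this single-step bound, I conclude by induction on $j \geq i$ that $a(u_j) \le a(u_i) + (j-i)$. Plugging in the hypothesis $a(u_i) < i+4$ gives $a(u_k) \le a(u_i) + (k-i) < k+4 = h+1$. By \autoref{obs:onlyonce} and \autoref{lem:stable-and}, no path node changes color again after its activation, and by \autoref{lem:stabletimes} the rest of the graph is already frozen by time $3$. Hence every node's opinion is constant from time $a(u_k) \le h$ onward, so $\ctime(G(\Phi),f_0) \le h < h+1$, as required.

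The main obstacle is the single-step propagation step: the bookkeeping of how many neighbors of each $u_{j+1}$ lie in $K_{\text{white}}$, in the literal representatives, and on the path (and the interplay with the laziness rule on ties) has to be done gate-by-gate, and the $v_1$-case dichotomy (literal colors cancel, or they force immediate activation) is the delicate point. A mild subtlety is the boundary case $j=0$, which must invoke \autoref{lem:stable-and} rather than \autoref{obs:onlyonce} to justify that $u_0$'s color is consistent with the propagation direction once $u_1$ fires. Once the cascade inequality is in hand, the rest is a one-line computation.
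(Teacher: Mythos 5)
Your single-step propagation claim and the induction $a(u_j)\le a(u_i)+(j-i)$ for $j\ge i$ are fine in spirit, but the final step does not follow: knowing $a(u_k)\le h$ does not mean \emph{every} node is frozen from time $a(u_k)$ on. The \contime is governed by the last color change anywhere on the path, i.e., by $\max_j a(u_j)$, and your argument says nothing about $a(u_j)$ for $j<i$. Running your forced propagation backwards only gives $a(u_j)\le a(u_i)+(i-j)<2i-j+4$, which at $j=0$ exceeds $h+1=k+4$ whenever $i>k/2$; so, as far as your argument shows, nodes behind $u_i$ could keep flipping until after time $h$. To close this you need the converse direction, which is the heart of the paper's proof: since by \autoref{lem:stabletimes} all non-path neighbors of a path node are frozen by time $2$ (resp.\ $3$), a node $u_j$ can only change color at a time $t>3$ if $u_{j-1}$ or $u_{j+1}$ changed at time $t-1$. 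Hence every activation traces back to a ``source'' that activates at time $3$ or $4$ using only frozen neighbors, and the paper then argues (after ruling out a wave started at $u_k$) that the only configuration reaching time $h+1$ is a single wave starting at $u_0$ at time $4$, forcing $a(u_i)=i+4$ for every $i$; any earlier activation anywhere collapses the whole schedule.

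A secondary problem is your $v_1$-case dichotomy. If both literal cliques attached to a $v_1$ node have the color opposite to $K_{\text{white}}$, that node has $6$ of its $10$ neighbors permanently against activation and never adopts $c$; the paper's convention then sets its activation time to $3$, but the node has not actually changed color, so the next induction step $a(u_{j+2})\le a(u_{j+1})+1$ loses its justification (the forced majority at $u_{j+2}$ relied on $u_{j+1}$ really turning white). The paper quarantines exactly this situation in \autoref{lem:illegalass} rather than folding it into the propagation argument. So the overall plan is salvageable, but as written the proof is incomplete.
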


In the following two lemmas, we enforce that initial opinion assignments which
do not represent valid assignments of Boolean values to literal cliques result
in premature termination of the \rd in $G(\Phi)$. An assignment is called
\emph{illegal} if there exist literal cliques such that the majority of $x_i$
and the majority of $\overline x_i$ have the same initial color.

\begin{lemma}\label{lem:illegalass}
Let $f_I$ be an illegal initial opinion assignment. The \contime
$\ctime\left(G(\Phi), f_I\right)$ is strictly less than $h+1$.
{\normalfont \hfill (see \autoref{apx:np-hardness})}
\end{lemma}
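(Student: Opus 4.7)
The plan is to show that the illegality of $f_I$ pins the activation time of a specific node $u_i$ on the path $G'$ to the minimum possible value $3$, after which the conclusion follows immediately from \autoref{lem:acttooearly}.

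First, by the definition of illegal assignment, there is a variable index $j$ such that the two literal cliques for $x_j$ and $\overline{x}_j$ share a common initial majority color; denote this color $d$. Let $u_i$ be the $v_1$ input node of the $j$-th $2/3$-gate, so that $i = 4(j-1)+1 \geq 1$.

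Second, I would verify that $a(u_i) = 3$. The neighborhood of $u_i$ has exactly ten nodes: three edges to representatives of the clique for $x_j$, three to representatives of the clique for $\overline{x}_j$, two to $K_{\text{white}}$, one to the predecessor $u_{i-1}$ (either $u_0$ when $j=1$, or the $v_4$ node of the previous $2/3$-gate otherwise), and one to the successor $u_{i+1}=v_2$. By \autoref{lem:stabletimes} all six literal representatives have color $d$ at every time $t \geq 2$ and remain stable thereafter. Since $6 > 10/2$, these six neighbors alone form a clear majority of $N(u_i)$ regardless of how the remaining four behave, so $f_t(u_i)=d$ for every $t \geq 3$. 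The definition of activation time then forces $a(u_i)=3$: if $d$ coincides with the activation color $c$, then $u_i$ adopts $c$ at time $3$; if $d \neq c$, then $u_i$ never adopts $c$ and the convention in the definition still sets $a(u_i)=3$.

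Finally, since $i \geq 1$ we have $a(u_i) = 3 < 5 \leq i+4$, which is exactly the hypothesis of \autoref{lem:acttooearly}. Applying that lemma yields $\ctime\left(G(\Phi), f_I\right) < h+1$, completing the proof. The only step that requires genuine care is the neighborhood count at $u_i$ together with the $6>10/2$ domination argument, but this is essentially the same calculation already carried out in Case~2 of the proof of \autoref{obs:onlyonce}, so no fundamentally new obstacle arises.
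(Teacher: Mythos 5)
Your proof is correct and follows essentially the same route as the paper's: identify the $v_1$ node of the $2/3$-gate for the offending variable, observe that its six stably-monochromatic representative neighbors out of ten force $a(u_i)=3$, and invoke \autoref{lem:acttooearly}. The only difference is that you spell out the neighborhood count and the two cases of the activation-time convention explicitly, which the paper leaves implicit.
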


\begin{lemma} \label{lem:black-white-clique}
If after two time steps $K_{\text{white}}$ and $K_{\text{black}}$ have the same
color, the process stops after strictly fewer steps than $h+1$.
\end{lemma}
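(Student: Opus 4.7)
The plan is to invoke Lemma~\ref{lem:acttooearly} with $i = k = 4n$, by showing that the last chain node $u_k$ (the output $v_4$ of the final $2/3$-gate) is activated at the earliest possible moment, namely $a(u_k) = 3$. Since $3 < k + 4 = h + 1$, this suffices to conclude that $\ctime(G(\Phi), f_0) < h + 1$.

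The first step would be to apply Lemma~\ref{lem:stabletimes} to fix both mega-cliques at their common color $c$ for every round $t \geq 2$. Next I would read off the neighborhood of $u_k$ from the construction of the final $2/3$-gate: it consists of two vertices in $K_{\text{white}}$, some small number $b$ of vertices in $K_{\text{black}}$, and the single predecessor chain vertex $u_{k-1}$. The exact value of $b$ is fixed by a delicate balance needed for the canonical run, namely $b$ must be large enough that $K_{\text{white}}$ alone cannot prematurely flip $u_k$, but small enough that $K_{\text{white}}$ together with a newly flipped $u_{k-1}$ does flip $u_k$. This forces the numbers of $K_{\text{white}}$- and $K_{\text{black}}$-neighbors of $u_k$ to differ by at most one, so together they provide at least three mega-clique neighbors on a neighborhood of size at most five.

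Under the hypothesis of the lemma, at time $2$ all of these mega-clique neighbors have color $c$, while only $u_{k-1}$ can possibly cast a vote for $1-c$. On a neighborhood of size at most five, at least three votes for $c$ against at most one vote for $1-c$ is a strict majority, so $f_3(u_k) = c$ regardless of $f_2(u_k)$ and $f_2(u_{k-1})$. Therefore $a(u_k) = 3$, and applying Lemma~\ref{lem:acttooearly} with $i = k$ yields the desired bound.

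The main (and essentially only) obstacle is pinning down the precise wiring of the final $2/3$-gate into $K_{\text{black}}$; once that count is in hand, the strict-majority calculation is routine and handles both sub-cases $c = 1$ (both mega-cliques white) and $c = 0$ (both mega-cliques black) uniformly by the same inequality.
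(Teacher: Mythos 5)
Your proof is correct and takes essentially the same route as the paper: both hinge on \autoref{lem:stabletimes} forcing $u_k$'s mega-clique neighbors to a common stable color $c$ by time $2$, so that $u_k$ activates at time $3$ and the backward cascade bounds the convergence time by $3+k < h+1$. The only cosmetic difference is that you delegate the cascade to \autoref{lem:acttooearly} while the paper redoes that induction inline, and your majority count for $u_k$ goes through for any number $b \geq 0$ of $K_{\text{black}}$-neighbors (since $2+b$ out of $3+b$ is always a strict majority), so the "delicate balance" you worry about is not actually needed.
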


\begin{proof}
Let $c$ be the color of both \megaclique{s} after two time steps. Note that
from \autoref{lem:stabletimes} we conclude that both cliques are stable at time
$2$. Therefore $u_{k}$ activates at most at time $3$, that is, $a(u_k)=3$. By
induction, one can show that $u_i$ will activate at most at time $3+k-i$. Hence
$u_1$ becomes activated at most at time $t=3+k-1<h$ and $u_0$ at most at time
$t = 3 + k$ which is strictly less than $h+1$. Since by
\autoref{lem:stabletimes} all other nodes are also stable at time $3+k<h+1$ the
claim follows.
\end{proof}

We now combine above lemmas and prove \autoref{lem:claim2}.

\begin{proof}[Proof of \autoref{lem:claim2}]
In the following we assume that $K_{\text{white}}$ and $K_{\text{black}}$ have
opposite colors after the second step, since otherwise the \contime is less
than $h+1$ as shown in \autoref{lem:black-white-clique}. W.l.o.g., assume
$K_{\text{white}}$ is colored white and $K_{\text{black}}$ is colored black.
Furthermore, we assume that the assignment is legal, since otherwise the
\contime is less than $h+1$ as shown in \autoref{lem:illegalass}. Finally, we
also assume that $u_1, \dots, u_k$ are initially black, since otherwise the
\contime is less than $h+1$ as shown in \autoref{lem:acttooearly}. Note that
this especially covers the node $u_1$ which we assume to be black at time $4$,
since otherwise again the \contime is less than $h+1$ according to
\autoref{lem:acttooearly}. 

According to the assumption of \autoref{lem:claim2}, $\Phi$ is not satisfiable.
That is, for every possible assignment of Boolean values to the variables in
$\Phi$, there exists a clause $(l_1 \vee l_2 \vee l_3)$ where all literals
$l_1$, $l_2$, and $l_3$ are \textsc{false}. Therefore, for any legal initial
opinion assignment $f_0$ in $G(\Phi)$, the representative nodes of the
corresponding literal cliques will be black at time $2$. Consequently, the
\textsc{or}-gate corresponding to that clause will be stable with color black
at time $3$.

This implies that the \textsc{and}-gate is black as long as $u_4$ is black
since at least $(m-2)+1+1 > |N(u_0)|/2$ neighbors are black. Since the
\textsc{and}-gate is black, we can only have $a(u_1)=5$ if $a(u_2)=4$.
According to \autoref{lem:acttooearly}, this results in a \contime strictly
less than $h+1$. Note that if $f_3(u_1)=1$, then $u_2$ will be activated at
time $4$ and again by \autoref{lem:acttooearly} this yields that the \contime
is less than $h+1$.
\end{proof}

Now we combine \autoref{lem:claim1} and \autoref{lem:claim2} to  show
\autoref{thm:hardness}.

\begin{proof}[Proof of \autoref{thm:hardness}]
It is easy to see that \ctdp is in NP. Furthermore, we can polynomially reduce
3\textsc{sat} to \ctdp. The correctness proof of the reduction follows from
\autoref{lem:claim1} and \autoref{lem:claim2}. Therefore, \ctdp is NP-complete.
\end{proof}

%


\section{Bounds on the \Dtime} \label{sect:main-result}
Since the problem is NP hard, we cannot hope
to calculate the voting time of a graph efficiently. Nevertheless, in this section we show, that it is possible to obtain non-trivial upper bounds on the \dtime
that are easy to compute. This section is dedicated to proving our upper bound on the \dtime.\autoref{thm:asym_upper}. The main contribution of this theorem  is the influence of 
symmetry which is studied in \autoref{sect:symmetry}.

We start by giving a formal version of the potential function argument \cite{GO80, PS83} as conceived in \cite{Win08}. In
the following we assume that each edge in $\{x,y\} \in E$ can be replaced by
two directed edges $(x,y)$ and $(y,x)$. The main idea is based on so-called
\emph{bad arrows} defined as follows.

\begin{definition} \label{def:bad-arrow}
Let $G = (V, E)$ be a graph with initial opinion assignment $f_0$. Let $v$
denote an arbitrary but fixed node and $u \in N(v)$ a neighbor of $v$. Let $t$
denote an arbitrary but fixed round. The directed edge $(v, u)$ is called
\emph{bad arrow} if and only if the opinion of $u$ in round $t+1$ differs from
the opinion of $v$ in round $t$.
\end{definition}

\begin{wrapfigure}{r}{0.3\textwidth} \centering
\vspace{-1\baselineskip}
\includegraphics{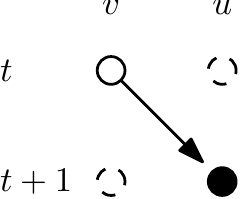}
{
\captionsetup{format=plain,font=small}
\caption{A \emph{bad arrow} from node $v$ to node $u$ in round~$t$.}
\label{fig:bad-arrow}
}
\vspace{-1\baselineskip}
\end{wrapfigure}

Intuitively, each of these directed edges $(v, u)$ can be seen as \emph{advice} given
from $v$ to $u$ in the voting process. In the case of a bad arrow the advice
was not followed by $u$ since it has a different opinion in the following round
than $v$. Observe that each bad arrow is incident at exactly two nodes and
thus we say it is \emph{outgoing} in the node at its tail and incoming in the
node at its head.
An example of such a bad arrow can be seen in \autoref{fig:bad-arrow}.

\begin{theorem} \label{thm:Winkler}
Let $G = (V, E)$ be a graph which contains only vertices of odd
degree.  The \dtime of the \rd on $G$ is at most $1+W$ where $W$ is an 
upper bound on the initial number of bad arrows for any initial opinion assignment on $G$.
In particular, the \dtime of $G$ is at most $2\cdot|E| + 1$.
\end{theorem}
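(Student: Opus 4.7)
My plan is to follow the classical Goles--Olivos / Winkler potential-function argument, using the count of bad arrows as a monovariant. Define $B_t$ to be the number of bad arrows at time $t$, i.e.\ the number of directed pairs $(u,v)$ with $u\in N(v)$ and $f_t(u)\neq f_{t+1}(v)$, so that $B_0\leq W$ by the hypothesis on $W$. The proof then reduces to three claims: (i) $B_{t+1}\leq B_t$ for every $t\geq 0$; (ii) if $B_{t+1}=B_t$, then $f_{t+2}(v)=f_t(v)$ for every $v\in V$; and (iii) $B_0\leq 2|E|$ for every initial assignment. Since $B_t$ is a non-negative integer, claims (i) and (ii) together force the process to reach a two-periodic configuration within at most $W$ iterations; combining with the definition of $\ctime$ yields $\ctime\leq 1+W$. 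The final sentence of the theorem is then immediate from (iii).

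The technical core is the monotonicity (i), which I would prove by edgewise charging. For each vertex $v$ of odd degree $d(v)$, the majority rule guarantees that $|\{u\in N(v):f_t(u)\neq f_{t+1}(v)\}|\leq\lfloor d(v)/2\rfloor$, which is exactly the number of bad arrows entering $v$ at time $t$; the analogous bound holds at time $t+1$. Summing over $v$ and reorganizing edge by edge, one exhibits, on each edge $\{u,v\}$, a matching between the contributions of its two directed arrows at step $t+1$ and at step $t$ showing that the former is dominated by the latter; this uses that $f_{t+2}(v)$ is itself a strict majority of the $f_{t+1}(w)$'s on $N(v)$. The odd-degree hypothesis is essential: with it, no ties arise, so the majority inequalities are tight without any correction term coming from the lazy tie-breaking rule.

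Claim (ii) I derive by inspecting the equality case of the charging: equality along every edge forces the strict majorities computed at $v$ from the data at time $t$ and from the data at time $t+1$ to induce identical next-opinions, which unwinds to $f_{t+2}(v)=f_t(v)$ for every $v$, i.e.\ a two-periodic state. Claim (iii) is immediate, since the graph has exactly $2|E|$ directed arrows, so $B_0$ cannot exceed this.

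I expect step (i) to be the main obstacle, as it is the nontrivial combinatorial heart of the argument and is precisely where the odd-degree hypothesis is used: without oddness one must refine the potential to absorb tie contributions, which is in fact exactly the situation the remainder of the paper must address for general graphs. The other two steps are bookkeeping and a trivial counting bound; assembling them with (i) gives the claimed bound $\ctime\leq 1+W\leq 2|E|+1$.
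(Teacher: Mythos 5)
Your overall skeleton coincides with the paper's: the number of bad arrows is a non-negative integer potential, it is non-increasing, it decreases strictly as long as some node has $f_{t+2}(v)\neq f_t(v)$ (this is exactly where odd degree enters, via the absence of ties), and its initial value is at most $2|E|$. Claims (ii) and (iii) and the final assembly into $\ctime\leq 1+W$ are fine.

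The gap is in your proof of (i). The \emph{edge-by-edge} domination you describe --- that on each edge $\{u,v\}$ the bad arrows present at step $t+1$ are dominated by those present at step $t$ --- is false. Counterexample: let $v$ be adjacent to a leaf $u$ and to $w_1,w_2$, where each $w_i$ is additionally adjacent to two leaves; all degrees are odd. Color $v,u,w_1,w_2$ white and all four outer leaves black. Then $f_1(v)=f_1(u)=1$ while $f_1(w_1)=f_1(w_2)=0$, hence $f_2(v)=0$. The edge $\{u,v\}$ carries no bad arrow at time $0$ (both $f_1(v)=f_0(u)$ and $f_1(u)=f_0(v)$ hold), but it carries the bad arrow $(u,v)$ at time $1$ since $f_2(v)\neq f_1(u)$. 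The global count still drops (from $2$ to $1$ in this example), but the drop occurs on other edges: monotonicity cannot be localized per edge. The correct accounting, which is what the paper does, is per \emph{vertex} after a head/tail re-association: write $\phi_t$ as the sum over $v$ of the outgoing bad arrows of $v$ at step $t$, namely $\left|\left\{u\in N(v): f_{t+1}(u)\neq f_t(v)\right\}\right|$, and write $\phi_{t+1}$ as the sum over $v$ of the incoming bad arrows of $v$ at step $t+1$, namely $\left|\left\{u\in N(v): f_{t+1}(u)\neq f_{t+2}(v)\right\}\right|$; both identities hold because every arrow has exactly one tail and one head. For each $v$ the two summands compare the same multiset $\{f_{t+1}(u)\}_{u\in N(v)}$ against $f_t(v)$ and against $f_{t+2}(v)$, respectively; they are equal when $f_{t+2}(v)=f_t(v)$, and the second is strictly smaller otherwise because $f_{t+2}(v)$ is the strict majority of that multiset (odd degree). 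This gives (i) and (ii) in one stroke. A minor further point: the bound $\left|\left\{u\in N(v): f_t(u)\neq f_{t+1}(v)\right\}\right|\leq\lfloor d(v)/2\rfloor$ you invoke is not what drives monotonicity; it is the separate counting bound on the \emph{initial} number of bad arrows (\autoref{lem:max_bad_arrows} in the paper), which is not needed for the $2\cdot|E|+1$ statement of this theorem.
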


For a proof, see \autoref{apx:main-result}. Note that
in \autoref{thm:Winkler} it is assumed that all nodes of the graph have odd-degree.
In the following we show how to remove this assumption. 
Let in the following \Veven be the set of even-degree vertices
in $V$ and, analogously, \Vodd be the set of odd-degree vertices. Clearly,
$V = \Veven \cup \Vodd$.

\begin{definition}\label{def:G-star}
Let $G=\left( V,E \right)$ be a graph. The graph $G^*=\left(V,E^*\right)$ is the graph obtained by adding a self loop to every node of even degree in $G$. More formally,
\begin{equation*}
E^* = E ~\cup \bigcup_{v \in \Veven} \left(v, v\right) \enspace .
\end{equation*}
\end{definition}
From the definition it follows that
$|E^*|=|E|+|\Veven|$.

\begin{theorem}\label{thm:2E_upper}
The \dtime of the \rd on any graph $G=(V,E)$ is
at most $1+W_{bad}$, where $W_{bad}$ is the number of
initial bad arrows in $G^*$.
\end{theorem}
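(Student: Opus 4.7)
The plan is to reduce \autoref{thm:2E_upper} to the odd-degree case already handled by \autoref{thm:Winkler}, with $G^*$ playing the role of the odd-degree surrogate of $G$. The key observation is that attaching a self-loop at every $v\in\Veven$ converts the lazy tie-breaking rule of the \rd on $G$ into a tie-free strict majority rule on $G^*$, in which the self-loop contributes $v$'s own current opinion once to the majority count.

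First, I would argue that the \rd on $G$ and the majority process on $G^*$ (with each self-loop counted as a single neighbor of $v$) produce the same trajectory of opinion assignments from any shared initial opinion assignment. For $v\in\Vodd$ there is nothing to check, since $\deg_G(v)$ is already odd and no self-loop is added. For $v\in\Veven$ the self-loop enlarges the effective neighborhood of $v$ to $\deg_G(v)+1$ elements, which is odd, so no ties arise in $G^*$. A clear majority in $G$ remains a strict majority in $G^*$, while a tie in $G$ — which is resolved lazily by setting $f_{t+1}(v)=f_t(v)$ — is broken strictly in favor of $f_t(v)$ in $G^*$ through the self-loop's extra contribution. In every case $f_{t+1}(v)$ takes the same value in the two processes, so their \dtime{s} agree.

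Second, I would revisit the bad-arrow monovariant that underlies \autoref{thm:Winkler} and verify that it carries over to $G^*$ in the presence of self-loops. Extending \autoref{def:bad-arrow} in the natural way, a self-loop $(v,v)$ is bad at round $t$ exactly when $f_{t+1}(v)\neq f_t(v)$, and with this convention every vertex of $G^*$ has an odd number of incident arrows. The counting underlying Winkler's argument is local at each vertex and depends only on the parity of the number of incident arrows, so it applies verbatim to $G^*$, yielding a bound of $1+W_{bad}$ on the \dtime of $G^*$, where $W_{bad}$ is the initial number of bad arrows (including self-loop arrows) in $G^*$. Combined with the first step, this gives the claimed bound on $G$. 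The step I expect to require the most care is precisely this verification, since the Goles--Olivos potential is usually stated for simple graphs; however, a direct inspection of the proof of \autoref{thm:Winkler} should confirm that self-loops contribute in a way fully compatible with the monotonicity argument.
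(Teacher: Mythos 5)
Your proposal is correct and follows essentially the same route as the paper: first establish that the trajectories on $G$ and $G^*$ coincide (ties in $G$ are resolved lazily, which the self-loop reproduces, while a clear majority at an even-degree node has margin at least two and so cannot be overturned by the self-loop), then apply \autoref{thm:Winkler} to the odd-degree graph $G^*$. Your explicit verification that the bad-arrow potential accommodates self-loops, with $(v,v)$ bad exactly when $f_{t+1}(v)\neq f_t(v)$, is a detail the paper leaves implicit but does not change the argument.
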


For a proof, see \autoref{apx:main-result}.
The upper bound on the \dtime considered in \cite{KPW14} follows 
from the $2\cdot|E|$ upper bound on the number of bad arrows of \autoref{thm:Winkler}.  
In the following we show that this result
can be improved further by a factor of $2$ by simply applying the following
lemma. 

\begin{lemma} \label{lem:max_bad_arrows}
Let $G$ be a graph. 
The number of initial bad arrows in $G^*$ 
can be at most $|E|-\ifrac{|V_{odd}|}{2}$.
\end{lemma}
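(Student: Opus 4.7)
My plan is to bound the number of initial bad arrows in $G^*$ by a per-target count: for each $v\in V$ I would bound the number of bad arrows whose head is $v$, and then sum over $v$. The whole argument hinges on a single observation, namely that in $G^*$ each vertex has an odd number of incoming arrows, so $f_1(v)$ is always determined by a strict majority of the opinions of the in-neighbors, with no lazy tie-breaking ever invoked.

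First I would fix the bookkeeping. Counting a self-loop as one directed arrow from $v$ to itself, the number of arrows into $v$ in $G^*$ is $d^*(v)=\deg_G(v)$ if $v\in V_{odd}$, and $d^*(v)=\deg_G(v)+1$ if $v\in V_{even}$; in either case $d^*(v)$ is odd. Next, an arrow $(u,v)$ of $G^*$ is bad in round $0$ exactly when $f_0(u)\neq f_1(v)$ (with $u=v$ in the self-loop case). Because $f_1(v)$ is the opinion held by a strict majority of the in-neighbors of $v$, at most $(d^*(v)-1)/2$ of them can disagree with $f_1(v)$. Hence the number of bad arrows with head $v$ is at most $(d^*(v)-1)/2$.

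Summing this inequality over all $v\in V$, the total number of initial bad arrows in $G^*$ is at most
\[
\sum_{v\in V}\frac{d^*(v)-1}{2}=\frac{1}{2}\Bigl(\sum_{v\in V}d^*(v)-|V|\Bigr)=\frac{2|E|+|V_{even}|-|V|}{2}=|E|-\frac{|V_{odd}|}{2},
\]
where the second equality uses the handshake identity $\sum_v d^*(v)=2|E|+|V_{even}|$ (each edge of $E$ contributes $2$ to the sum of in-degrees, each added self-loop contributes $1$).

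The only point requiring care is the convention on self-loops: counting a self-loop as contributing $2$ to $d^*(v)$ would ruin the parity argument, since then $d^*(v)$ would be even for $v\in V_{even}$ and the strict-majority step would fail. With the convention above, which matches \autoref{def:G-star} and the lazy update rule (the self-loop simply records $v$'s own current opinion once), the argument is entirely routine; I do not anticipate any real obstacle beyond making this convention explicit.
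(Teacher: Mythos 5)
Your proposal is correct and follows essentially the same route as the paper: bound the incoming bad arrows at each node by strictly less than half of its (odd) degree in $G^*$, then sum over all nodes via the handshake identity to get $|E|-|V_{odd}|/2$. Your explicit treatment of the self-loop convention is a welcome clarification of a point the paper's own proof leaves implicit, but it does not change the argument.
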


Therefore, combining \autoref{thm:2E_upper} and \autoref{lem:max_bad_arrows} we obtain the following corollary.
\begin{corollary} \label{thm:E_upper}
The \dtime of the \rd on any graph $G=(V,E)$ is
at most $1+|E|-\ifrac{|V_{odd}|}{2}$.
\end{corollary}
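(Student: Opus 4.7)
The plan is to count bad arrows by summing the contribution at each vertex's head, rather than per edge. The key feature of $G^*$ that should drive everything is that every vertex in $G^*$ has an odd number of neighbors when the self-loop (if present) is counted as one neighbor of itself. This odd parity eliminates ties in the majority rule and gives a clean per-vertex bound.

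Concretely, I would fix a vertex $v$ and let $d^*(v)$ denote the number of neighbors of $v$ in $G^*$; so $d^*(v) = d_G(v) + 1$ if $v \in \Veven$ and $d^*(v) = d_G(v)$ otherwise, and $d^*(v)$ is always odd. Let $k_v$ be the number of neighbors of $v$ in $G^*$ (including $v$ itself via the self-loop if applicable) whose initial opinion differs from $f_0(v)$. Because $d^*(v)$ is odd, one of the two strict inequalities $k_v < d^*(v)/2$ or $k_v > d^*(v)/2$ holds, determining $f_1(v)$ without any tie-breaking: $v$ retains its opinion in the first case and flips in the second.

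I now count the bad arrows incoming to $v$. An arrow $(u,v)$ is bad precisely when $f_1(v) \neq f_0(u)$. A short case split on whether $v$ flips shows that the number of such arrows equals either $k_v$ (if $v$ keeps its opinion, so $k_v < d^*(v)/2$) or $d^*(v)-k_v$ (if $v$ flips, so $d^*(v)-k_v < d^*(v)/2$). In either case the number of bad arrows incoming to $v$ is $\min(k_v, d^*(v)-k_v) \leq (d^*(v)-1)/2$, using odd parity of $d^*(v)$.

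Finally I sum over $v$. Since every arrow has exactly one head, the total number of initial bad arrows is $\sum_v \min(k_v, d^*(v)-k_v) \leq \sum_v (d^*(v)-1)/2$. Using $\sum_v d^*(v) = 2|E| + |\Veven|$ (each edge of $G$ contributes $2$, each added self-loop contributes $1$) and $|V|-|\Veven|=|V_{odd}|$, this simplifies to
\[
\frac{2|E| + |\Veven| - |V|}{2} \;=\; |E| - \frac{|V_{odd}|}{2},
\]
which is the stated bound. There is no genuine obstacle here; the only point requiring a bit of care is to fix a consistent bookkeeping convention for the self-loops (each contributes one to $|N^*(v)|$ and one arrow $(v,v)$), so that the degree sum identity $\sum_v d^*(v) = 2|E|+|\Veven|$ is correct.
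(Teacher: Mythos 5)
Your proposal is correct and follows essentially the same route as the paper: the corollary is obtained by combining \autoref{thm:2E_upper} (voting time is at most one plus the number of initial bad arrows in $G^*$) with \autoref{lem:max_bad_arrows}, and your per-vertex argument --- bounding the incoming bad arrows at each $v$ by $\min(k_v, d^*(v)-k_v)\leq (d^*(v)-1)/2$ using the odd parity enforced by the self-loops, then summing the degrees --- is exactly the paper's proof of that lemma, just written out with more careful bookkeeping. The only thing left implicit in your writeup is the appeal to \autoref{thm:2E_upper} to convert the bad-arrow count into the ``$1+{}$'' voting-time bound, which is precisely the step the paper also takes for granted in stating the corollary.
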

%

Note that \autoref{thm:E_upper} is tight for general graphs up to an additive
constant of~$1$.
Indeed, consider a path with an initial opinion assignment on which the
opinions alternate except for the last two nodes, which share the same opinion.

\subsection{Improved Bounds for Dense Graphs}
\label{sect:improved-bounds}

We observe that \autoref{thm:E_upper}
is (almost) tight, and it gives us a \dtime linear in the number of vertices for sparse graphs where $|E| =
\BigO{|V|}$. However, for dense graphs with $|E| = \BigOmega{|V|^2}$ there is
room for improvement. Now the main goal in this following subsection is to
reduce the dominant term of the \dtime even further, which leads us
to the following theorem.

\begin{theorem} \label{thm:half_E_upper}
Let $G=(V, E)$ denote a graph. For any initial opinion assignment $f_0$ on $G$,
the \dtime of the \rd is at most
$1+\frac{|E|}{2}+\frac{|\Veven|}{4}+\frac{7}{4}\cdot|V|$.
{\normalfont \hfill (see \autoref{apx:main-result})}
\end{theorem}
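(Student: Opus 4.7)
The plan is to improve on Corollary~\ref{thm:E_upper} by a factor of roughly two, exploiting the fact that after one step of the process the opinion assignment $f_1$ is no longer arbitrary but is constrained to be the $f_0$-majority at each vertex. The idea is thus to apply the bad-arrow bound of Theorem~\ref{thm:2E_upper} one round later, to $f_1$, rather than directly to $f_0$.

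The first step is a shift argument. From the definition $\ctime = \min\{t:f_{t+2}=f_t\}$, one gets $\ctime(G,f_0) \le 1 + \ctime(G,f_1)$, because the process is deterministic and advancing the starting state by one round reduces the remaining voting time by at most one. Applying Theorem~\ref{thm:2E_upper} to the process started from $f_1$ gives $\ctime(G,f_1) \le 1 + B_1$, where $B_1$ denotes the number of bad arrows in $G^*$ at round~$1$. Combining these, $\ctime(G,f_0) \le 2 + B_1$, so the entire task reduces to proving $B_1 \le \frac{|E|}{2}+\frac{|\Veven|}{4}+\frac{7}{4}|V|-1$.

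The second step is the bound on $B_1$. The direct vertex-wise count, following the proof of Lemma~\ref{lem:max_bad_arrows}, only gives $B_1 \le |E|-|\Vodd|/2$, so we must extract additional information from the fact that $f_1(v)=\mathrm{maj}(f_0|_{N^*(v)})$ for every $v$. The plan is to classify each undirected edge $\{u,v\}$ by the round-$1$ colors $f_1(u),f_1(v)$ and the flip indicators $F(u)=[f_2(u)\ne f_1(u)]$, $F(v)=[f_2(v)\ne f_1(v)]$: a monochromatic edge at round~$1$ contributes $F(u)+F(v)$ bad arcs, while a bichromatic one contributes $(1-F(u))+(1-F(v))$. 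Summing and using that flipping vertices satisfy the strict-minority condition $\delta^{(1)}(v)\le -1$ (or $\le -2$ for $v\in\Veven$) yields $B_1 \le 2E_{\ne}^{(1)} - |V_F^{(1)}|$ up to self-loop corrections, where $E_{\ne}^{(1)}$ counts bichromatic edges at round~$1$ and $V_F^{(1)}$ is the set of vertices flipping from round~$1$ to~$2$. The extra ingredient one needs to close the gap is a bound of the form $E_{\ne}^{(1)} \le \frac{|E|}{4} + \frac{|V_F^{(1)}|}{2} + O(|V|)$; this is the place where the structural constraint ``$f_1$ is the $f_0$-majority'' has to be invoked, because only it makes the round-$1$ colors of adjacent vertices correlated (their neighborhoods $N^*(u)$ and $N^*(v)$ overlap on the edge $\{u,v\}$ itself and on common neighbors).

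The hard part is precisely the bound on $E_{\ne}^{(1)}$, i.e., the quantitative version of ``one round of majority smooths the opinion assignment''. I would attack it by charging each bichromatic edge $\{u,v\}$ at round~$1$ either to a vertex in $V_F^{(1)}$ (via its flip) or to a local ``defect'' in the $f_0$-neighborhoods of $u$ and $v$, where the defect witnesses the fact that the two majorities disagreed despite the overlapping neighborhoods. The low-order terms $\frac{|\Veven|}{4}$ and $\frac{7}{4}|V|$ will presumably come out of the boundary accounting in this charging scheme: self-loops (one per $v\in\Veven$), parity corrections for odd-degree vs.\ even-degree vertices, and the slack lost whenever a vertex has degree $0$ or $1$ or sits at the boundary of two disagreeing majorities. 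The main technical work is to verify that these boundary terms aggregate into the stated constants.
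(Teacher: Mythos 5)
Your reduction to ``bound the number of bad arrows at round $1$'' cannot work: the step you yourself flag as the hard part is not only left unproved (you only sketch how you \emph{would} attack it, with the constants ``presumably'' working out), it is actually false. What you need is that every opinion assignment that arises as the majority image of some $f_0$ has at most $\frac{|E|}{2}+\frac{|\Veven|}{4}+\frac{7}{4}|V|-1$ bad arrows in $G^*$. Take $G=K_{2n}$ minus a perfect matching $M$, and color the two endpoints of each matched pair with opposite colors. Every vertex has even degree $2n-2$ and hence a self-loop in $G^*$; a white vertex then sees $n$ white and $n-1$ black vertices in its closed neighborhood, so this assignment is a fixed point of the process and in particular is its own majority image, i.e., it occurs as an $f_1$. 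Its bad arrows are exactly the ordered bichromatic edges, of which there are $2n(n-1)=|E|$, while your required bound equals $n(n-1)+4n-1$; the former exceeds the latter for all $n\ge 5$. So no charging scheme can establish the inequality you need: the constraint that $f_1$ is an $f_0$-majority does not push the potential below the generic $|E|-|\Vodd|/2$ ceiling of \autoref{lem:max_bad_arrows}, which remains tight even on majority images. Your shift step $\ctime(G,f_0)\le 1+\ctime(G,f_1)$ and the appeal to \autoref{thm:2E_upper} are fine, but on their own they recover nothing beyond \autoref{thm:E_upper}.

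The paper extracts the factor $2$ from the \emph{rate of decrease} of the potential rather than from its value at any fixed early round. It shows that at any time before convergence either (i) two nodes that disagreed at time $t-2$ both flip between $t-2$ and $t$, in which case the argument of \autoref{thm:Winkler} yields a potential drop of at least $2$ in that step, or (ii) all flipping nodes flip towards one common opinion $q$ (the assignment is $q$-permanent), in which case a separate monotonicity argument (\autoref{lem:monotone} and \autoref{lem:fundamentalist_stabilization}) shows convergence within $2|V|$ further steps. Charging at most $\bigl(|E|-|\Vodd|/2\bigr)/2$ rounds to regime (i) and $2|V|$ rounds to regime (ii) gives exactly the stated bound. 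This case distinction on the direction of the flips is the idea missing from your proposal.
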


\subsection{The Influence of Symmetry} \label{sect:symmetry}

We observe that the \rd is much faster on graphs that exhibit certain types
of symmetry, such as the star graph, the complete graph
and many other graphs in which several nodes share a common neighborhood.
We investigate this feature of the process to further improve the bounds 
obtained so far. 
We recall that a set of nodes $S$ is called a \textit{family} if and only if for all nodes $u,v\in S$ we have
$N\left( u \right)\setminus\left\{ v \right\}=N\left( v \right)\setminus\left\{ u \right\}$.
The key fact is that these nodes of any family will
behave in a similar way after the first step.
\begin{definition} \label{def:asym-graph}
Let $\family{u}$ denote the family $u$ belongs to.
We write $u \sim v$ if $\family{u}=\family{v}$.
\end{definition}

\begin{lemma}  \label{lem:transitivity}
The relation $\sim$ defines an equivalence class. In particular, all nodes in the same family either form a 
clique or are all pairwise non-adjacent, and they all have the same degree in $G$.
{\normalfont \hfill (see \autoref{apx:main-result})}
\end{lemma}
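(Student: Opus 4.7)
The plan is to first unpack the defining equality $N(u)\setminus\{v\}=N(v)\setminus\{u\}$ into a clean dichotomy, then verify the equivalence-relation axioms, and finally read off the structural consequences.

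Step one: I will split the condition according to whether $uv\in E$. If $uv\notin E$, then $u\notin N(v)$ and $v\notin N(u)$, so the equality reduces to $N(u)=N(v)$; call such a pair \emph{false twins}. If $uv\in E$, the equality reduces to $N[u]=N[v]$, where $N[\cdot]$ is the closed neighborhood; call them \emph{true twins}. Reflexivity and symmetry of $\sim$ are immediate from this dichotomy (singletons are trivially families, and the defining equation is symmetric in $u$ and $v$).

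The main work will be transitivity. Assuming $u\sim v$ and $v\sim w$ with $u,v,w$ pairwise distinct, I will split into four cases by the twin type of each pair. In the two \emph{pure} cases the conclusion drops out: if both pairs are false twins then $N(u)=N(v)=N(w)$, and $uw\in E$ would force $u\in N(u)$, which is impossible in a simple graph, so $u$ and $w$ are false twins; if both are true twins then $N[u]=N[v]=N[w]$, and $w\in N[u]\setminus\{u\}\subseteq N(u)$ gives $uw\in E$, so $u$ and $w$ are true twins. The main obstacle is ruling out the two \emph{mixed} cases. For instance, if $u,v$ are false twins and $v,w$ are true twins, then $w\in N(v)=N(u)$ forces $uw\in E$, whence $u\in N(w)\subseteq N[w]=N[v]$; but $u\neq v$ and $u\notin N(v)$ (since $uv\notin E$), so $u\notin N[v]$, a contradiction. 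The symmetric mixed case (true twins then false twins) is ruled out identically. This establishes transitivity and hence that $\sim$ is an equivalence relation.

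The structural claims will follow from the same mixed-case obstruction applied inside a single family $S$: if $S$ contained both a true-twin pair and a false-twin pair sharing a common vertex, the argument above would yield a contradiction, so every pair in $S$ is of a single twin type. Thus $S$ is either a clique (all true twins) or an independent set (all false twins). Equality of degrees then follows because false twins share open neighborhoods and true twins share closed neighborhoods, so in either case $|N(u)|=|N(v)|$ for $u,v\in S$. I expect essentially all of the work to live in the mixed-case contradiction; the remainder is a direct inspection of the definitions.
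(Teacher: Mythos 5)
Your proof is correct, and it reaches the same conclusions as the paper's, but the decomposition is genuinely different. The paper works directly with the defining identity: from $N(u)\setminus\{v\}=N(v)\setminus\{u\}$ and $N(v)\setminus\{w\}=N(w)\setminus\{v\}$ it derives $N(u)\setminus\{v,w\}=N(w)\setminus\{u,v\}$ by pure set manipulation, and then settles the membership of the one remaining element $v$ via a chain of adjacency equivalences $v\in N(u)\iff u\in N(v)\iff u\in N(w)\iff\cdots\iff v\in N(w)$; the clique-versus-independent-set statement is then read off from that chain (all three pairs have the same adjacency status). You instead reformulate the family condition as the standard true-twin/false-twin dichotomy ($N[u]=N[v]$ when $uv\in E$ versus $N(u)=N(v)$ when $uv\notin E$) and prove transitivity by a four-case analysis in which the two mixed cases are impossible. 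The two arguments encode the same underlying fact --- your ``mixed cases cannot occur'' is exactly the content of the paper's equivalence chain --- but your version makes the structural consequence (each family is a clique or an independent set) fall out explicitly rather than being asserted somewhat tersely ``from transitivity,'' which is a genuine expository gain; the degree claim is equally immediate in both. One small point to tighten: to conclude that \emph{every} pair in a family $S$ has the same twin type, you only rule out mixed pairs sharing a common vertex, so for two vertex-disjoint pairs you need one extra sentence (fix $a\in S$, observe that all pairs $\{a,x\}$ with $x\in S$ share the vertex $a$ and hence a common type, then propagate to any pair $\{x,y\}$ via the shared vertex $x$). This is routine but worth stating.
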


\begin{corollary}
 For any graph $G$, its asymmetric graph $G^\Delta$ is 
well-defined.
\end{corollary}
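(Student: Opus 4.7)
The plan is to verify that each step of the construction of $G^\Delta$ is unambiguous, invoking the preceding lemma at every stage. The statement of the corollary is essentially a bookkeeping check, so the proof will be short.

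First I would observe that since $\sim$ is an equivalence relation (by the lemma), the families form a partition of $V$. Hence the procedure that ``replaces every family'' in the definition acts on pairwise disjoint parts and does not have to resolve any overlap. In particular, every vertex of $G$ belongs to exactly one family, so each vertex is treated exactly once.

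Next I would check that the classification into the two cases of the definition is unambiguous. The definition distinguishes \emph{families of odd-degree non-adjacent nodes} from \emph{any other proper family}, which only makes sense if both the parity of the degree and the property of being pairwise non-adjacent are well-defined on the whole family. The lemma guarantees exactly this: all members of a family share a common degree in $G$, so ``odd-degree'' is a property of the family; and all members of a family are either pairwise adjacent (forming a clique) or pairwise non-adjacent, so ``non-adjacent family'' is likewise a property of the family and not of an individual pair.

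Finally I would address the choice of representatives. Having fixed the number of kept vertices (one or two) for each family, the set $V^\Delta$ depends a priori on which specific vertices we pick from each family. However, by the definition of family, any two nodes $u,v$ in the same family satisfy $N(u)\setminus\{v\}=N(v)\setminus\{u\}$, so swapping one chosen representative for another unchosen member of the same family produces an isomorphic induced subgraph (the isomorphism simply relabels the representative). Thus $G^\Delta$ is well-defined up to isomorphism, which is all one needs since the quantities appearing in \autoref{thm:asym_upper} ($|E^\Delta|$, $|V^\Delta|$, $|V^\Delta_{odd}|$, $|V^\Delta_{even}|$) are graph invariants.

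The only non-trivial point is the last one, but it is immediate from the common-neighborhood property built into the definition of a family; there is no real obstacle beyond carefully stringing together the three consequences of the preceding lemma.
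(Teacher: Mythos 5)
Your proof is correct and follows essentially the same route as the paper's: invoke \autoref{lem:transitivity} to get that the families partition $V$, so the replacement acts on disjoint parts and $G^\Delta$ is well-defined. You are in fact more careful than the paper's two-sentence argument, since you additionally verify that the case distinction (odd-degree non-adjacent vs.\ other proper family) is a property of the whole family and that the choice of representatives only matters up to isomorphism — both legitimate points the paper leaves implicit.
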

\begin{proof}
Thanks to \autoref{lem:transitivity}, the set of families is a partition of the nodes of $G$.
By construction of $G^\Delta$, every family $S$ in $G$ is replaced by one or two nodes in $G^\Delta$.  
Therefore, there is a bijection between the families in $G$ and the corresponding node or pair of nodes in $G^\Delta$. Hence $G^\Delta$ is well-defined.\end{proof}

We now prove \autoref{thm:asym_upper}.

\begin{proof}[Proof of \autoref{thm:asym_upper}]
Let $v$ and $v'$ be two nodes of the same family $\family{v}=\family{v'}$, having the same color at time $t$. 
Since $v$ and $v'$ observe the same opinions in their respective neighborhood,
$v$ and $v'$ will also have the same color anytime after $t$.
It follows that if at some time $t$ there is a bad arrow going from $v$ to
some neighbor $u$ (or from $u$ to $v$), 
then there will also be a bad arrow from $v'$ to $u$ (or from $u$ to $v'$).
In particular, this implies that whenever the number of bad arrows adjacent
to $v$ is decreased by some amount $c$, also the identical number of bad arrows
adjacent to $v'$ will be decrease by the same amount $c$.

Now recall the proofs of \autoref{thm:E_upper} and \autoref{thm:half_E_upper}.
An estimate of the \dtime is obtained by upper bounding the number
of bad arrows that can possibly disappear during the process.
The main argument is the following. It suffices to only consider the bad-arrows adjacent to $v$ in $G^\Delta$, since the corresponding bad arrows adjacent to $v'$ will disappear whenever 
those adjacent to $v$ do.

Now let $v$ and $v'$ be two nodes with $\family{v}=\family{v'}$ having a different color at time $t$.
We can divide every such family that contains nodes of different opinions into two sets $S_0$ and $S_1$ according to their initial opinion in the first round.
Note that all nodes in either set behave identically.
In particular, an adjacent bad arrow from a node $u$ to all nodes of either set disappears at the same time. Since there is bijection between the families of $G$ and the pairs of nodes and singletons of $G^\Delta$,
and  by applying 
\autoref{thm:E_upper} and \autoref{thm:half_E_upper}
we can bound the \dtime by bounding the bad arrows in $G^\Delta$.
This yields the first part of the claim. Using \cite{CH94}, one can obtain the
modular composition of $G$ in $\BigO{|E|}$ time steps. In another $\BigO{|E|}$ time
steps one can select from the modular composition those modules that form a
family, using  that all nodes of a family have the same degree. Hence,
$G^\Delta$ can be constructed in linear time.\end{proof}

\vfill
\paragraph{Acknowledgments.}
We would like to thank our supervisors Petra Berenbrink, Andrea Clementi, and Robert Elsässer for helpful discussions and important hints.
\newpage
\bibliographystyle{alpha}
\bibliography{paper}
\newpage
\appendix
\section*{\textsc{Appendix}}
\section{Omitted Proofs from \autoref{sec:np}} \label{apx:np-hardness}

\begin{proof}[Proof of \autoref{lem:stable-and}]
Note that $t$ is at least $4$ by definition of the activation time. Let $c$ be
the color of $K_{\text{black}}$ and $\overline c = 1-c$ the opposite color of
$c$. If at most $m-2$ of the \textsc{or}-gates have color $\overline c$, then
the node of the \textsc{and}-gate has at least $2+(m-2)>|N(u_0)|/2$ neighbors
which will be colored $ c$ for all $t\geq 3$ and therefore the
\textsc{and}-gate will be colored $ c$ for every $t'\geq 4$.

If, however, $m-1$ of the \textsc{or}-gates have color $\overline c$, only one
\textsc{or}-gate has not been activated and has color $c$. Thus the node of the
\textsc{and}-gate $u_0$ has on layer $1$ and layer $2$ a total of $m-1$
neighbors of color $c$ and also a total of $m-1$ neighbors of color $\overline
c$. That is, these neighbors \emph{cancel} each other out. By
\autoref{lem:stabletimes} the cliques and gates on layers $1$ and $2$ do not
change their color for any $t'\geq 4$. Therefore, the node $u_0$ can only be
influenced by $u_1$ and the color of $u_0$ at time $t$ is the color of $u_1$ at
time $t-1$ for any $t\geq 4$. By \autoref{obs:onlyonce} we know that $u_1$ may
change its opinion only once in a round $t = a(u_1) \geq 3$ and therefore for
any round $t' \geq t + 1$ we have $f_{t'}(u_0)=f_{t}(u_0)$.

Finally, if $m$ of the \textsc{or}-gates are colored $\overline c$, then $u_0$
has $m>|N(u_0)|/2$ neighbors of color $\overline c$ and since by
\autoref{lem:stabletimes} these $m$ neighbors do not change their color for
$t\geq 4$ we have $f_t(u_0)=\overline c$ for all $t\geq 4$. Thus, in all cases
the claim follows.
\end{proof}

\begin{proof}[Proof of \autoref{lem:acttooearly}]
By \autoref{lem:stabletimes} all nodes of $V_K$ and $V_\Kor$ are stable after
time step $2$ and $3$, respectively. From \autoref{obs:onlyonce} we observe
that every node of $u_1, \dots, u_k$ with $k=4\cdot n$ can only change its
color once after time step $3$. Note that from \autoref{lem:stable-and} we
conclude that if $u_1$ changes its color at time $t$ then the \textsc{and}-gate
does not change its color for any $t'\geq t+1$. 

We now consider the \emph{inner} nodes of the path $u_j$ for which $1\leq j<k$.
In order for a node $u_j$ to change its color at time $t > 3$, one of the
neighboring nodes $u_{j-1}$ or $u_{j+1}$ must have changed its color at time
$t-1$. This follows, since according to \autoref{lem:stabletimes} all other
neighbors of the node $u_j$ are already stable after $2$ steps. Now if a node
$u_j$ changes its opinion, one of the neighbors of $u_j$ must have changed its
opinion in the previous round. This can only be either $u_{j-1}$ or $u_{j+1}$
(or both), since all other neighbors of $u_j$ are already stable.

Since all nodes $u_1, \dots, u_k$ of the path in $G'$ can only change their
color once and since $u_0$ becomes stable one time step after $u_1$ changes its color, the
\contime of the graph $G(\Phi)$ is dominated by the behavior of the path. That
is, in order to achieve a \emph{long} \contime, the path must change
its color one node after the other, resulting in a \contime in $\BigOmega{n}$.
Observe that this can only happen if the entire path has a different color
than the $K_\text{white}$ after the second step. As soon as one of the path
nodes is assigned the same opinion as the $K_\text{white}$ \megaclique, the
entire path will be activated too early and the process stops prematurely.

Now in order to have a \contime of $h+1$, the path in $G'$, $u_0, \dots, u_k$,
must activate from $u_0$ over $u_1$ up to $u_k$ or in the reverse direction
from $u_k$ over $u_{k-1}$ down to $u_0$. We now argue that activating from $u_k$
down to $u_0$ cannot yield a \contime of at least $h+1$. 

Note that all neighbors of $u_k$ except for $u_{k-1}$ are stable at any time
step $t \geq 3$. Therefore, $u_k$ either has the same fixed opinion as the
$K_{\text{white}}$ and thus $a(u_k) = 3$, or $u_k$ has an activation time
$a(u_k) = a(u_{k-1}) + 1$. Now in the first case, $a(u_k) = 3$, the \contime is
bounded by $3+k=3+4n < h+1$, since the path becomes stable one node after the
other starting with the node $u_k$. That is, the resulting \contime is strictly
less than $h+1$. In the second case, $a(u_k) \geq 4$, we note that $a(u_k) =
a(u_{k-1}) + 1$ and thus the path cannot activate from $u_k$ down to $u_0$.

We conclude that in order to have a \contime of $h+1$ the nodes must activate
from $u_0$ to $u_k$ starting with node $u_0$ in time step $4$ such that
$a(u_0)=4$. Therefore, $a(u_i)$ must be $i + 4$ to have a \contime of
$h+1$ which shows the lemma.
\end{proof}

\begin{proof}[Proof of \autoref{lem:illegalass}]
In the following we use $\Krep(x_i)$ and $\Krep(\overline x_i)$ to denote the
representative nodes of the literal cliques for $x_i$ and $\overline x_i$.
Note that by \autoref{lem:stabletimes} these representative nodes are stable at
time $2$. Now assume both cliques have color $c$ after the second step.

Let $u$ be the first node $v_1$ of the $i$-th $2/3$-gate. By the construction
of $G(\Phi)$, $u$ is connected to $6$ representative nodes of literal cliques
which all share the same color $c$. Since the representative nodes are stable
after $2$ steps, $u$ will also have color $c$ for every time step $t' \geq 3$.
That is, $a(u) = 3$ and thus by \autoref{lem:acttooearly} the \contime is less
than $h+1$.
\end{proof}

\section{Omitted Proofs from \autoref{sect:main-result}} \label{apx:main-result}

\begin{proof}[Proof of \autoref{thm:Winkler}]

The idea of the proof is to define a potential
function $\phi_t$ that is strictly monotonically decreasing over the time. 
Let $f_0$ be any initial opinion assignment. The potential function $\phi_t$ is simply the number of
bad arrows defined in \autoref{def:bad-arrow}, that is
\begin{equation*}
\phi_t = \phi_t(G, f_t) = \left|\left\{ (v, u) \in E: f_{t+1}(u) \neq f_{t}(v) \right\}\right| \enspace .
\end{equation*}

Let $v$ denote an arbitrary but fixed node. To show that $\phi_t$ indeed
is a strictly monotonically decreasing potential function as long as $t \leq
\ctime(G, f_0)$ we distinguish the following two cases.

\noindent \textbf{Case 1.} The node $v$ has the same opinion in round $t+1$ as in round $t-1$, that is, $f_{t+1}(v) = f_{t-1}(v)$.

For each neighbor $u$ of $v$ that has a different opinion in round $t$ than
$v$ in round $t-1$, there is a bad arrow from $v$ to $u$. We denote the number
of these outgoing bad arrows leaving round $t-1$ as $m_{t-1}(v)$, that is
\[m_{t-1}(v):=\left|\left\{u\in N(v)\,\middle |\,f_{t}(u) \neq f_{t-1}(v)\right\}\right| \enspace .\]
  There is an incoming bad arrow at node $v$ in round $t+1$ from each
neighbor that has a different opinion in round $t$. 
 Let $n_{t+1}(v)$ be this number, that is
\[n_{t+1}(v):=\left|\left\{u\in N(v)\,\middle |\,f_{t}(u) \neq f_{t+1}(v)\right\}\right| \enspace .\]
Now recall that $v$ has the same opinion in
round $t+1$ as in round $t-1$. Thus, the number of incoming bad arrows at node $v$ in
round $t+1$ is the same as the number of bad arrows leaving node $v$ in
round $t-1$, which gives us 
\begin{equation}
n_{t+1}(v) = m_{t-1}(v) \enspace .\label{eq:2period_case1}
\end{equation} An example for this case is shown
in \autoref{fig:winkler-case-1}.

\noindent \textbf{Case 2.} The node $v$ has a different opinion in round $t+1$
than in round $t-1$, that is, $f_{t+1}(v) \neq f_{t-1}(v)$.

Let $m_{t-1}(v)$ and $n_{t+1}(v)$ be defined as above.
Since $v$ changed its opinion after round $t-1$,
either in step $t$ or in step $t+1$, there is an incoming bad
arrow at node $v$ in round $t+1$ for every neighbor of $v$ that did not have an
incoming bad arrow in round $t$. Now the
key is that node $v$ can only have its current opinion in round $t+1$ if there
is a clear majority in round $t$ in favor of this opinion among all of its
neighbors. Observe that this is where the odd degrees mentioned in the
problem statement \cite{Win08b} indeed play a role. Since every node has odd
degree, there is always a clear majority among its neighbors and no tie
between opinions can ever occur. Now if there is a clear majority in round $t$,
the number of incoming bad arrows at node $v$ in round $t+1$ will be strictly
smaller than the number of outgoing bad arrows at node $v$ in round $t-1$, that is
\begin{equation}
n_{t+1}(v) < m_{t-1}(v) \enspace .\label{eq:2period_case2}
\end{equation}  An example for this case is shown in
\autoref{fig:winkler-case-2}.

\begin{figure} \centering
\begin{subfigure}{0.45\textwidth} \centering
\includegraphics{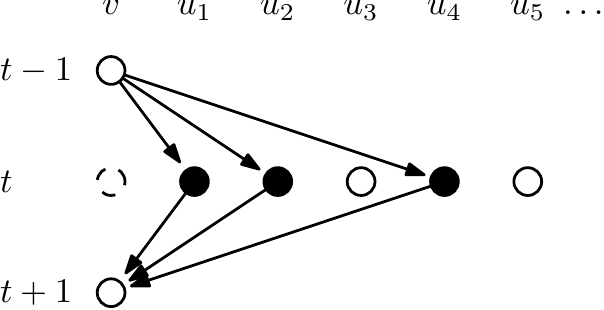}
\caption{Case 1: $f_{t+1}(v) = f_{t-1}(v)$}
\label{fig:winkler-case-1}
\end{subfigure}
\begin{subfigure}{0.45\textwidth} \centering
\includegraphics{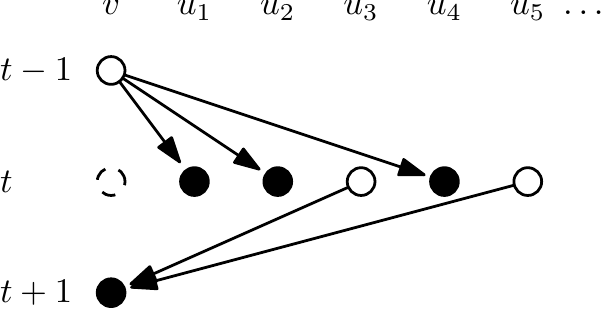}
\caption{Case 2: $f_{t+1}(v) \neq f_{t-1}(v)$}
\label{fig:winkler-case-2}
\end{subfigure}
\caption{Above figures show examples for the two cases. In the first case, the
number of outgoing bad arrows from $v$ in round $t-1$ equals the number of
incoming bad arrows at $v$ in round $t+1$. In the second case the node $v$ has
color black in round $t+1$ due to a majority for black in round $t$. Therefore
the number of incoming bad arrows at $v$ in round $t+1$ is strictly smaller
than the number of outgoing bad arrows from node $v$ in round $t-1$.}
\end{figure}

\noindent \textbf{Both cases.}
We take the sum over all outgoing bad arrows leaving the nodes in round $t-1$ and obtain
$M_{t-1} = \sum_{v \in V} m_{t-1}(v)$. Analogously, we take the sum 
over all incoming bad arrows in round $t+1$
which gives us $N_{t+1} = \sum_{v \in V} n_{t+1}(v)$.
However, since each bad arrow is incident in exactly two nodes, we conclude that
the sum over all incoming bad arrows in round $t+1$ is the same as the sum over
all outgoing bad arrows in round $t$. This gives us
\begin{equation}
M_{t} = N_{t+1} = \phi_{t} \enspace .\label{eq:2period_num-of-bad}
\end{equation}

If the \rd has reached a \stab
state in round $t$, from \autoref{eq:2period_case1} and \autoref{eq:2period_num-of-bad} we get
\begin{equation*}
 \phi_{t} = N_{t+1} = M_{t} =  \sum_{v \in V} m_{t}(v) = \sum_{v \in V} n_{t+2}(v) = N_{t+2} =  \phi_{t+1} \enspace .
\end{equation*}

Now assume that the \rd has not yet reached a \stab
state in round $t$. That is, at least one node has a different opinion in round
$t+1$ than it had in round $t-1$. Then from \autoref{eq:2period_case2} and \autoref{eq:2period_num-of-bad} we get
\begin{equation*}
 \phi_{t} = N_{t+1} = M_{t} =  \sum_{v \in V} m_{t}(v) > \sum_{v \in V} n_{t+2}(v) = N_{t+2} =  \phi_{t+1}
\end{equation*}
which proves that the \dtime of the \rd on $G$ is bounded from above by the initial number of bad arrows.

In particular, since there can be a bad arrow only between ordered pairs of adjacent nodes, the initial number of bad arrows is bounded by
$2\cdot|E|$. Together with the observation that above argument can only be applied after the first step this implies that
\begin{equation*}
\ctime \leq 2\cdot|E| + 1 \enspace . \qedhere
\end{equation*}
\end{proof}

\begin{proof}[Proof of \autoref{thm:2E_upper}]
For every node $v \in V$ the sequence of opinions,
$\left(f_t(v)\right)$, is exactly the same for the \rd in
$G$ as in the \rd in $G^{*}$. Indeed, every odd-degree node
has the same neighborhood in both, $G$ and $G^*$, thus the process
is the same for these nodes. Now consider an arbitrary
even-degree node $v$ and fix a round $t$. If in $G$ there is a tie in round $t$, $v$ behaves lazily
in $G$ and keeps its own opinion at round $t$. In $G^*$, the node $v$ considers its own
opinion and thus also stays with its own opinion. If
on the other hand there is a clear majority in $G$, this majority has a winning
margin of at least $2$, since $v$ has even degree. Thus, the impact of the self loop
can be neglected and again $v$ behaves the same in $G^*$ as in $G$.

We can thus bound the \dtime of $G$ by applying \autoref{thm:Winkler} to the odd-degree graph $G^{*}$. 
\end{proof}

\begin{proof}[Proof of \autoref{lem:max_bad_arrows}]
From the definition of the \rd we conclude that only
less than half of a node's neighbors could have had a different
opinion at time $t=0$, since otherwise the node would have changed its own
opinion. Formally, for any $v\in V$ it holds that
\[
	\sum_{u\in N\left( v \right)}\left[ f_1\left( v \right)\neq 
		f_0\left( u \right) \right] \leq \frac{ |N\left( v \right)| }2 \enspace .
\]
Also, for odd-degree nodes the above inequality is strict. 
Therefore, the number of incoming bad arrows at a node at time $t=1$ is
smaller than half of its degree (strictly, for odd nodes). 
Thus, summing up all initial bad arrows we get 
$\ifrac{\left(2\cdot|E|-|\Vodd|\right)}{2}$, 
which concludes the proof.
\end{proof}

\subsection{Omitted Proofs from \autoref{sect:improved-bounds}}
\label{apx:improved-bounds}

\begin{definition} \label{def:qswap}
In an arbitrary round $t$ an opinion assignment $f'_t$ is a \emph{\qswap} of $f_t$ if for all nodes $v$
\begin{equation*}
	f'_t(v) = f_t(v) \quad \vee \quad f'_t(v) = q\enspace .
\end{equation*}
That is, all opinions assigned by $f'_t$ are either the original opinion assigned by $f_t$ or $q$.
\end{definition}

Based on this definition we can state and prove the following key lemma.

\begin{lemma}[Monotonicity] \label{lem:monotone}
Let $f_t$ be an opinion assignment in round $t$ and $f'_t$ a \qswap of $f_t$.
Let furthermore $v$ be a node for which $f_t(v) \neq f'_t(v)$. It holds for any
time step $k \geq t$ that
\begin{equation*}
f_k(v) = q \implies f'_k(v) = q \enspace .
\end{equation*}
Furthermore, any subsequent opinion assignment $f'_k$ is a \qswap of $f_k$.
\end{lemma}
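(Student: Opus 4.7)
The plan is to prove the \emph{Furthermore} clause by induction on $k \geq t$, since the first claim is an immediate corollary: by definition of \qswap, every node $v$ satisfies $f'_k(v) \in \{f_k(v), q\}$, and if $f_k(v) = q$ both options collapse to $q$. Thus the whole statement reduces to showing that if $f'_k$ is a \qswap of $f_k$, then $f'_{k+1}$ is a \qswap of $f_{k+1}$. The base case $k = t$ is the hypothesis of the lemma.

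For the inductive step, I would fix an arbitrary node $u$ and, without loss of generality, take $q = 1$ (the case $q=0$ is symmetric). The key monotonicity observation is that replacing some opinions by $1$ can only increase the count of $1$-neighbors and decrease the count of $0$-neighbors of $u$. Concretely, letting $W_k,B_k$ denote the number of neighbors of $u$ with opinion $1$ and $0$ under $f_k$, and $W'_k,B'_k$ the analogous counts under $f'_k$, the inductive hypothesis gives $W'_k \geq W_k$ and $B'_k \leq B_k$.

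I would then case-split on $f_{k+1}(u)$. If $f_{k+1}(u) = 0$, there is nothing to show, since $f'_{k+1}(u) \in \{0,1\} = \{f_{k+1}(u), q\}$ trivially. If $f_{k+1}(u) = 1$, I must show $f'_{k+1}(u) = 1$. Either (a) there is a strict majority $W_k > B_k$ in round $k$, in which case monotonicity yields $W'_k \geq W_k > B_k \geq B'_k$ and so $u$ votes $1$ under $f'$ as well, or (b) $W_k = B_k$ and $f_k(u) = 1$, in which case the inductive hypothesis applied to $u$ itself forces $f'_k(u) = 1$; monotonicity then leaves only $W'_k > B'_k$ (clear majority, $f'_{k+1}(u) = 1$) or $W'_k = B'_k$ (tie, so laziness keeps $f'_{k+1}(u) = f'_k(u) = 1$). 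Either subcase gives $f'_{k+1}(u) = 1$, closing the induction.

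The only delicate spot will be the tie subcase, where neither monotonicity alone nor laziness alone suffices; both must cooperate. The essential point is that in a tie the node's own opinion matters, and the inductive hypothesis $f_k(u) = q \Rightarrow f'_k(u) = q$ is exactly what protects $u$ in that scenario, so the value $q$ is preserved across the lazy step. Everything else is routine bookkeeping on neighborhood majorities.
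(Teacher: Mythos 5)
Your proof is correct and follows essentially the same route as the paper's: an induction on $k$ maintaining the invariant that $f'_k$ is a \qswap of $f_k$, using the fact that swapping opinions to $q$ can only strengthen the majority for $q$ in every neighborhood. If anything, your explicit treatment of the tie/laziness subcase (where the induction hypothesis applied to $u$ itself guarantees $f'_k(u)=q$) is more careful than the paper's own write-up, which asserts a majority for $q$ and leaves the lazy tie implicit.
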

\begin{proof}
We show \autoref{lem:monotone} by induction over $k$. The base case for $k = t$
is trivially true. Now suppose that \autoref{lem:monotone} holds for $k \leq
m$. Let $v$ be an arbitrary but fixed node for which $f_{m+1}(v) = q$. Since
$f_{m+1}(v) = q$ we had a majority for $q$ among the neighbors of $v$ in the
previous opinion assignment $f_{m}$ and according to the induction hypothesis
$f'_m$ is a \qswap of $f_m$. Therefore, in $f'_{m}$ the number of nodes with
opinion $q$ could have only increased, strengthening the majority for opinion
$q$ even further. Thus, $f'_{m+1}(v) = q$ holds. Now assume $f'_{m+1}$ was not
a \qswap of $f_{m+1}$. That is, there exists a node $u$ for which $f'_{m+1}(u)
\neq f_{m+1}(u)$ and $f'_{m+1}(u) \neq q$. This is a contradiction to the
previous statement. Together, this concludes the induction.
\end{proof}

In other words, \autoref{lem:monotone} states that \emph{strengthening} an
opinion will never make it weaker in a subsequent round, that is, if a node
ends up with opinion $q$, it also ends up with the same opinion in the
\qswap{}ped opinion assignment.

\begin{definition}
An opinion assignment $f_t$ is \emph{\qfund} if $f_{t+2}$ is a \qswap of $f_t$.
\end{definition}

We now use the definition above to further bound the \dtime, since
the \rd has the property that once the process is either in a \fund{0} or
\fund{1} state it will \stabil in a number of steps linear in $|V|$.
Furthermore, note that a \sstate is both \fund{0} and
\fund{1}.

\begin{lemma} \label{lem:fundamentalist_stabilization}
A \qfund opinion assignment $f_t$ \stabil{s} in time $2\cdot|\{v \in V: f_t(v) \neq q\}|$.
\end{lemma}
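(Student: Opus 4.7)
The plan is to track only the even-indexed configurations $f_t, f_{t+2}, f_{t+4}, \ldots$ and show that the set of nodes with opinion $q$ in this subsequence grows strictly monotonically until the process becomes two-periodic.

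First, I would bootstrap the $q$-fundamentalist property forward in time using the monotonicity lemma (\autoref{lem:monotone}). Viewing $f_t$ as one initial configuration and $f_{t+2}$ as another (the $q$-swap of $f_t$), and running the deterministic process from each starting point, the outcomes after $2k$ further steps are $f_{t+2k}$ and $f_{t+2k+2}$ respectively. Monotonicity then yields that $f_{t+2k+2}$ is a $q$-swap of $f_{t+2k}$ for every $k \ge 0$, and in particular that $f_{t+2k}(v)=q$ implies $f_{t+2k+2}(v)=q$. So $q$-fundamentalism is preserved along even steps.

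Next, define $A_k := \{v \in V : f_{t+2k}(v) = q\}$. By the previous paragraph, $A_k \subseteq A_{k+1}$ for all $k \ge 0$. I would then observe the dichotomy: either $A_k \subsetneq A_{k+1}$, in which case $|A_{k+1}| \ge |A_k|+1$, or $A_k = A_{k+1}$. In the latter case, for any $v$, if $v \in A_k$ then both configurations agree ($=q$ at $v$), while if $v \notin A_{k+1}=A_k$ the $q$-swap property forces $f_{t+2k+2}(v)=f_{t+2k}(v)$ (since the alternative $f_{t+2k+2}(v)=q$ is excluded). Hence $A_k = A_{k+1}$ is equivalent to $f_{t+2k+2}=f_{t+2k}$, i.e.\ the process has reached a two-periodic state.

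Finally, since $|A_k|$ is strictly increasing as long as the process has not stabilized, and since $|A_k| \le |V|$ with $|A_0| = |V| - |\{v : f_t(v) \neq q\}|$, there can be at most $|\{v : f_t(v) \neq q\}|$ values of $k$ before $A_k = A_{k+1}$. Each such $k$ corresponds to two rounds of the \rd, yielding the claimed bound of $2\cdot|\{v \in V : f_t(v) \neq q\}|$. The only subtlety is making sure that monotonicity is applied to two genuine runs of the process with distinct starting configurations at the same reference time, rather than to a single run; once that shift of perspective is in place, the rest is a clean counting argument on the sets $A_k$.
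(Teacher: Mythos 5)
Your proposal is correct and follows essentially the same route as the paper's proof: apply the monotonicity lemma to the pair $(f_t, f_{t+2})$ to propagate the $q$-swap relation along even steps, then count that either a node switches to $q$ every two rounds or the process is already two-periodic. Your write-up is just a more explicit version (via the sets $A_k$) of the paper's terser argument.
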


\begin{proof}
By definition $f_{t+2}$ is a \qswap of $f_t$. Thus we can apply
\autoref{lem:monotone} and conclude that either all nodes have at time $t+2$
the same opinion as at time $t$, or some nodes have changed their opinion to $q$. That is,
all nodes with opinion $q$ at time $t$ will also have opinion $q$ at time
$t+2$. So there are two possibilities. Either every two time steps at least one node switches to
opinion $q$ or every node has again its former opinion and we are in a \sstate. Thus the
process \stabil{s} in at most $2\cdot|\{v \in V: f_t(v) \neq q\}| < 2\cdot |V|$ steps.
\end{proof}

We will now use this result to prove an upper bound on the \dtime that is better
than \autoref{thm:E_upper} for dense graphs.

\begin{proof}[Proof of \autoref{thm:half_E_upper}]
Let $f_t$ be an opinion assignment that has not yet reached a \sstate at time $t$. In the proof of
\autoref{lem:fundamentalist_stabilization} we made the observation that there
must exist a node $v \in V$ for which $f_t(v)\neq f_{t-2}(v)$. We therefore
distinguish the following two cases.

\noindent \textbf{Case 1.} The opinion assignment $f_{t-2}$ is \fund{q}. 

\noindent \textbf{Case 2.} There exists \emph{another} node $u$ with
$f_{t-2}(u)\neq f_{t-2}(v)$ such that $f_t(u)\neq f_{t-2}(u)$, that is, $u$ is
non-\stab and disagrees with $v$ at times $t$ and $t-2$.

\noindent As long as we are in case 2, by repeating the argument in case 2 of
the proof of \autoref{thm:Winkler} we see that the number of bad arrows
drops by at least $2$ in each step (one due to $v$ and another one due to $u$). 
According to \autoref{lem:max_bad_arrows}, this can be the case for at most
$1+\frac{\left(|E|-|V_{odd}|/2\right)}{2}$ steps, since the \rd will \stabil after that time. On the other hand, if at some point we 
are in case 1, the process will \stabil in at most $2\cdot|V|$
steps as shown in \autoref{lem:fundamentalist_stabilization}. Together, these
two cases yield the bound
\begin{equation*}
	1+\frac{|E|-|V_{odd}|/2}{2}+2\cdot|V|=
	1+\frac{|E|}{2}-\frac{|V_{odd}|}{4}+\frac{|V|}{4}+\frac{7}{4}\cdot|V|=
	1+\frac{|E|}{2}+\frac{|\Veven|}{4}+\frac{7}{4}\cdot|V| \enspace . \qedhere
\end{equation*}
\end{proof}

\subsection{Omitted Proofs from \autoref{sect:symmetry}}
\label{apx:influence-of-symmetry}

{ \let\iff\Leftrightarrow
\begin{proof}[Proof of \autoref{lem:transitivity}]
Reflexivity and symmetry of $\sim$ hold trivially.
What remains to be shown is transitivity, that is, $\forall u, v, w \in V$ it holds that if $\family{u}=\family{v}$ and $\family{v}=\family{w}$, then also $\family{u}=\family{w}$.
	By definition, we have 
$		N\left( u
	\right)\setminus \left\{ v \right\}=N\left( v \right)\setminus\left\{ u
	\right\}$ and $
N\left( v \right)\setminus \left\{ w \right\}=N\left( w
	\right)\setminus\left\{ v \right\}
$.
By using the previous identities,
	it follows that $N\left( u \right)\setminus \left\{ w,v
		\right\} =( N\left( u \right)\setminus \left\{ v \right\}
		)\setminus \left\{ w \right\}=( N\left( v \right)\setminus
		\left\{ u \right\})\setminus \left\{ w \right\}= ( N\left( v
		\right)\setminus \left\{ w \right\})\setminus \left\{ u
		\right\}= ( N\left( w
		\right)\setminus \left\{ v
		\right\})\setminus \left\{ u \right\}=N\left( w
		\right)\setminus \left\{ u,v \right\}$
	and 
	\begin{equation} v\in N\left( u \right)\iff u\in
		N\left( v \right) \iff
		u\in N\left( w \right)\iff w\in N\left( u
		\right) \iff w\in
		N\left( v \right)\iff v\in N\left( w \right) \enspace .
		\label{eq:transitivity2} 
	\end{equation} Due to
	\eqref{eq:transitivity2}, $w$ either belongs to both $N\left( u
	\right)$ and $N\left( w \right)$ or to none of them, hence
	\eqref{eq:transitivity2} implies $N\left( u \right)\setminus \left\{ w
	\right\}=N\left( w \right)\setminus\left\{ u \right\}$.
This shows transitivity of the relation $\sim$. 

From the transitive property of $\sim$ it follows that all nodes in the same family either form a clique or are all non-adjacent. From this latter fact together with the definition 
$$\family{u}=\family{v} \iff N\left( u \right)\setminus\left\{ v \right\}=N\left( v \right)\setminus\left\{ u \right\}$$ it also follows that all nodes in the same family have the same degree.
\end{proof}
}

\section{Further Computational Properties}
\label{sec:furthercompprops}
In this appendix we  investigate some properties of the \rd
w.r.t.\ the potential function of \cite{GO80, PS83}, that is, the number of \emph{bad arrows} defined in \autoref{def:bad-arrow}. 
We show that the \contime is not monotone w.r.t.\ the value of the potential function, and we investigate how many
opinion assignments exhibit the same bad arrows. Overall, our results highlight the strengths and weaknesses of such a potential function
approach in bounding the \dtime of the \rd.
\label{sect:computational-properties}
\begin{lemma} \label{lem:not-monotone}
The \dtime is not monotone w.r.t.\ the initial number of bad arrows.
\end{lemma}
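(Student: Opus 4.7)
The plan is to prove non-monotonicity by a direct counterexample: we exhibit two instances $(G_1, f_0^{(1)})$ and $(G_2, f_0^{(2)})$ whose initial bad-arrow counts and convergence times go in opposite directions, namely with $\phi_0(G_1, f_0^{(1)}) > \phi_0(G_2, f_0^{(2)})$ while $\ctime(G_1, f_0^{(1)}) < \ctime(G_2, f_0^{(2)})$. Since the \dtime of any graph lower-bounds the \contime of each of its initial assignments, such a pair also witnesses that the \dtime cannot be a monotone function of the maximum initial bad-arrow count of a graph.

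A natural candidate pair contrasts a small dense graph, where the majority is always decisive and the process stabilizes almost immediately, against a small sparse path, where opinions must propagate hop by hop. Concretely, I would take $G_1 = K_5$ with a $3$-vs-$2$ black-white split: the two minority nodes flip to the majority color in one step, the process reaches its two-periodic state after a single round, and every directed edge pointing into a minority vertex contributes an initial bad arrow, producing a comparatively large potential. For $G_2$ I would take the path $P_5 = v_1 v_2 v_3 v_4 v_5$ with initial assignment $(B, W, B, W, W)$: applying \autoref{def:majority-voting} with the lazy tie-breaking rule round by round shows that the system takes three rounds to reach its two-periodic state, whereas \autoref{def:bad-arrow} counts only a single initial bad arrow (all other directed edges either connect same-color endpoints or have both endpoints unchanged by the first step).

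The key step is therefore the direct round-by-round simulation of both instances combined with a careful bad-arrow enumeration according to \autoref{def:bad-arrow}. The main obstacle is purely clerical: the lazy tie-breaking rule has to be applied consistently at every node whose neighbors split their opinions, otherwise the sequence $(f_t)$ is miscomputed and either the \contime or the bad-arrow count could be wrong. Once both simulations are carried out, the two chains of numerical values immediately yield the required strict inequalities and the lemma follows.
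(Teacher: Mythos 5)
Your round-by-round computations are correct as far as they go: $K_5$ with a $3$-vs-$2$ split converges in one round with $8$ initial bad arrows, and $P_5$ with $(B,W,B,W,W)$ converges in three rounds with a single initial bad arrow. (One slip: since $f_1\equiv$ black on $K_5$, the bad arrows are the directed edges \emph{leaving} the minority vertices, not those pointing into them; the count of $8$ happens to be unaffected. Also, the \contime of each assignment lower-bounds the \dtime, not the other way around as you wrote.)

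The genuine gap is that your two instances live on \emph{different} graphs. The paper's proof — and the only reading of the lemma that carries non-trivial content — fixes a single graph $G$ and exhibits two initial assignments $f^{(bad)}$, $f^{(good)}$ on that same $G$ with more initial bad arrows for $f^{(bad)}$ yet $\ctime(G,f^{(bad)})<\ctime(G,f^{(good)})$. This is what demonstrates that the potential function is a poor predictor of \contime \emph{on a given graph}, which is the point of the surrounding discussion. Comparing $K_5$ against $P_5$ is essentially content-free, because bad-arrow counts across different graphs scale with $|E|$ anyway; nobody would expect monotonicity across graphs of different sizes. The fix is exactly the paper's construction: glue your two gadgets into one connected graph (the paper uses a star $S_i$ with a pendant path $P_j$, $i>j$, attached to its center), then use one assignment that loads many bad arrows onto the dense star part (converging in one step) and another that places a few bad arrows along the path (converging in $\BigTheta{j}$ steps). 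Your two simulations transfer almost verbatim to such a combined graph, so the repair is mechanical, but as written the proposal does not prove the intended statement.
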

\begin{proof}
Let $G$ be a graph consisting of a star graph $S_i$ with $i$ leaves that has a
path graph $P_j$ of length $j$ connected to its center node such that $i >
j$. We now can define two initial opinion assignments $f^{(bad)}$ and
$f^{(good)}$ for which the initial number of bad arrows in $f^{(bad)}$ is
greater than the initial number of bad arrows in $f^{(good)}$ but still
$\ctime(G,f^{(bad)}) < \ctime(G,f^{(good)})$.

As $f^{(bad)}$ assignment, we color $\left\lceil \ifrac{i}{2} \right\rceil - 1$
leaves of the star graph $S_i$ white and all other nodes, including the path
$P_j$, black. As $f^{(good)}$ assignment, we color all the nodes of $S_i$
black and assign alternating opinions to the nodes of the path $P_j$. It is
straightforward to verify that the described opinion assignments prove the
statement.
\end{proof}
\begin{figure}[t!]
\centering
\begin{subfigure}{0.45\textwidth} \centering
\includegraphics{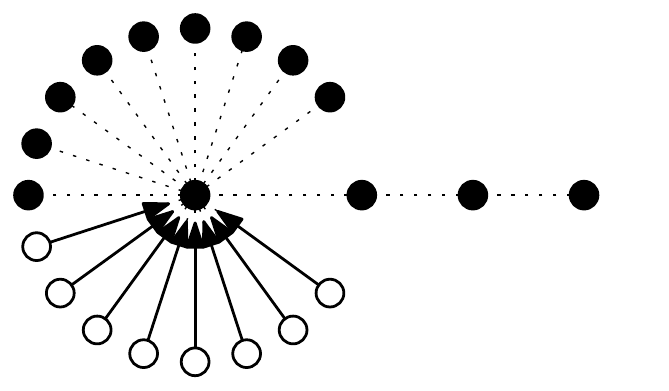}
\caption{initial opinion assignment $f^{(bad)}$}
\label{fig:example-not-monotone-1}
\end{subfigure}
\begin{subfigure}{0.45\textwidth} \centering
\includegraphics{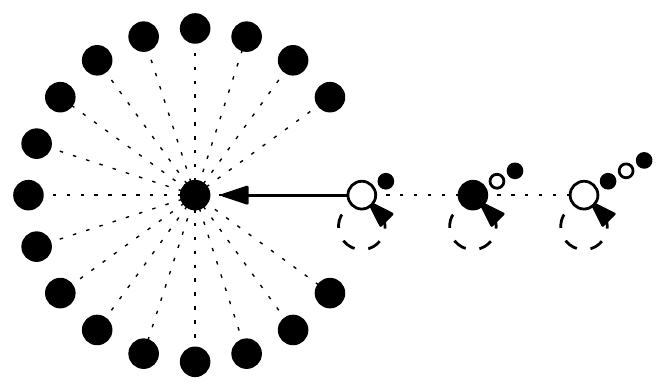}
\caption{initial opinion assignment $f^{(good)}$}
\label{fig:example-not-monotone-2}
\end{subfigure}
\caption{The figure shows an example for the graph described in the proof of
\autoref{lem:not-monotone}. It consists of a star graph $S_{17}$ joined at the
center node with a path of length $3$. Clearly, the initial opinion assignment
$f^{(bad)}$ shown in \autoref{fig:example-not-monotone-1} has a total number of
$7$ bad arrows while the initial opinion assignment $f^{(good)}$ shown in
\autoref{fig:example-not-monotone-1} has only one \emph{true} bad arrow along
with $3$ self loop bad arrows. Still, the process will \stabil{} in only one
step for $f^{(bad)}$ while it will take $3$ steps for $f^{(good)}$.}
\label{fig:example-not-monotone}
\end{figure}

An example for a graph $G$ consisting of a $S_{17}$ and a $P_{3}$
can be seen in \autoref{fig:example-not-monotone}. 
The example shows that even though the initial opinion assignment in \autoref{fig:example-not-monotone-1} has much more initial bad arrows,
the \rd \stabil{s} much faster for the opinion assignment shown in \autoref{fig:example-not-monotone-2}.

Suppose that, instead of specifying the initial opinion assignment, we decide
in advance what bad arrows are there. We can do that by deciding for each
ordered pair $\left( u,v \right)$ for which $\left\{ u,v \right\}\in E$
whether we want to have a bad arrow going from $u$ to $v$. We formalize this
notion by means of the following definitions.

\begin{definition}
Let $G=(V,E)$ be a graph and $\beta : V\times V \rightarrow \left\{0,1\right\}$
denote a characteristic function on $V \times V$. Then $\beta$ is a bad arrows
assignment on $G$ if there exists an opinion assignment $f$ on $G$ that
determines $\beta$ such that $\beta$ is the indicator function of the bad arrows we
have on $G$ w.r.t.\ the opinion assignment $f$.
\end{definition}

According to this definition we clearly have $\left\{u,v\right\} \notin E
\implies \beta(u,v) = 0$ for any bad arrows assignment $\beta$. However, there do also
exist characteristic functions on the (directed) set of edges of $G$ that do
not form a valid bad arrows assignment. An example of such an invalid
assignment that motivates above definition is shown in
\autoref{fig:invalid-bad-arrows}.

\begin{figure}[h!]
\centering
\begin{subfigure}{0.4\textwidth}
\centering
\includegraphics{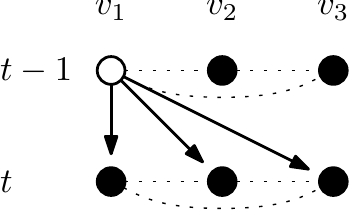}
\caption{valid assignment}
\end{subfigure}
\begin{subfigure}{0.4\textwidth}
\centering
\includegraphics{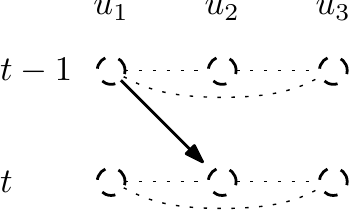}
\caption{invalid assignment}
\end{subfigure}
\caption{Not every characteristic function on the directed edges is a
valid bad arrows assignment.}
\label{fig:invalid-bad-arrows}
\end{figure}

\autoref{fig:invalid-bad-arrows} shows two different assignments
of bad arrows for the $K_3$, a clique of size $3$. 
The left assignment is valid, whereas the right assignment cannot
be valid. This is since in cliques of odd size all nodes share the same opinion
after exactly one step. Therefore all nodes at step $t$ will have the same
opinion. Since, however, $u_1$ had in step $t-1$ a different opinion than this
majority opinion in step $t$, a bad arrow must exist between $u_1$ and
$u_3$ (and also a loop from $u_1$ to itself, if we consider self-loops).

In proving upper bounds on the \dtime we consider the bad arrows assignment determined by the initial opinion
assignment. One may wonder whether in doing so we are \emph{losing
information}. In the following lemma we show that, given a valid bad arrows
assignment, we can reconstruct the initial opinion assignment up to exchanging
black and white (and up to two more possibilities in bipartite graphs).

\begin{lemma} \label{lem:arrows_vs_opinions}
Let $G$ be a connected graph and let $\beta$ be a valid bad arrows assignment on
$G$. If the graph is not bipartite, there are exactly two opinion assignments, otherwise there are exactly
four opinion assignments that determine $\beta$. 
\end{lemma}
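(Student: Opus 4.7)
\begin{idea}
The plan is to pin down exactly how two opinion assignments $f, f'$ that induce the same $\beta$ can differ, and then to verify that every algebraic possibility is realised by a bona-fide opinion assignment. I would start from the trivial symmetry $f \mapsto 1-f$: complementing every opinion preserves $\beta$, because the majority rule (together with the lazy tie-breaking) commutes with complementation, so the derived next-round assignment $f_1$ is also complemented and the edge quantity $\beta(u,v) = f(u) \oplus f_1(v)$ is unchanged. This already exhibits at least two distinct witnesses for every valid $\beta$.

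For the upper bound I would set $h_0 = f \oplus f'$ and $h_1 = f_1 \oplus f_1'$ for two candidate witnesses and rewrite the identity $f(u) \oplus f_1(v) = f'(u) \oplus f_1'(v) = \beta(u,v)$ as the purely local constraint $h_0(u) = h_1(v)$ along every directed edge $(u,v) \in E$. Iterating this constraint along a walk of length two yields $h_0(u_0) = h_0(u_2)$, so $h_0$ is constant on every even-walk equivalence class of $G$. In a non-bipartite connected graph any two vertices are joined by walks of both parities, so $h_0$ must be a global constant, leaving at most the two candidates $h_0 \equiv 0$ and $h_0 \equiv 1$. In a bipartite connected graph with parts $X$ and $Y$ the even-walk classes are exactly $X$ and $Y$, so $h_0$ is specified by two independent bits and there are at most four candidates.

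The remaining task is to exhibit, for each candidate, an opinion assignment that realises it. In the non-bipartite case $f$ and $1-f$ suffice. In the bipartite case I would also consider the two ``half-flip'' assignments $f^X$ and $f^Y$, obtained from $f$ by complementing opinions on $X$ or $Y$ only, and show that the round-one output of $f^X$ coincides with $f_1$ on $X$ and with $1-f_1$ on $Y$ (and symmetrically for $f^Y$); this is precisely the $h_1$-pattern forced by the rigidity step, so $\beta$ is preserved edge by edge. The main obstacle I expect is the tie-break bookkeeping in this step: because every $Y$-neighbourhood lies entirely inside $X$, uniformly complementing the opinions on $X$ preserves the tie-status of each such neighbourhood and merely flips the winning opinion when there is a clear majority, while the lazy tie-breaker $f(v)$ at a $Y$-node is untouched. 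One must combine these local facts node by node to conclude that $f_1^X$ has the claimed form and that $\beta$ really is preserved. Once that case analysis is completed, the symmetry, rigidity, and realisation arguments together yield the stated dichotomy of exactly two (respectively four) witnesses.
\end{idea}
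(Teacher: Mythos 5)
Your rigidity step and your complementation symmetry are sound, and the former is in substance the same argument as the paper's: the paper fixes a vertex and propagates determination outward through the neighborhoods $N_i$ by parity, which is exactly your observation that $h_0=f\oplus f'$ must be constant on the even-walk classes of $G$. The genuine gap is in the bipartite realisation step, precisely at the point you flag as ``the main obstacle.'' The claimed identity that the round-one output of $f^X$ equals $f_1$ on $X$ and $1-f_1$ on $Y$ fails at every node whose neighbourhood is tied: for $v\in Y$ with a tied neighbourhood, complementing $X$ preserves the tie and the lazy rule returns $f^X(v)=f(v)=f_1(v)$ rather than $1-f_1(v)$, while for a tied $v\in X$ it returns $f^X(v)=1-f(v)=1-f_1(v)$ rather than $f_1(v)$. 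Concretely, take the three-vertex path with edges $\{u,v\}$ and $\{v,w\}$, parts $X=\{u,w\}$ and $Y=\{v\}$, and $f=(f(u),f(v),f(w))=(0,0,1)$. The middle node is tied, $f_1(v)=0$, and the unique bad arrow is $(w,v)$; but $f^X=(1,0,0)$ and $f^Y=(0,1,1)$ each produce the unique bad arrow $(u,v)$ instead. Only the two globally complementary assignments $(0,0,1)$ and $(1,1,0)$ determine this $\beta$, so the half-flips genuinely change $\beta$ and no amount of tie-break bookkeeping can rescue this step.

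This is not only a defect of your write-up: the same example shows that the ``exactly four'' count for bipartite graphs is false under lazy tie-breaking, so no realisation argument can succeed in general. (It does go through when ties cannot occur, e.g.\ when all degrees are odd, since then a one-sided complementation cleanly flips every clear majority on the other side.) The paper's own proof avoids the issue only by never addressing the realisation direction: it shows that the assignment is determined by two free bits, giving \emph{at most} four candidates, and then asserts without verification that all four induce the same $\beta$. Your attempt to prove the ``exactly'' part is the more honest route, but to complete it you must either restrict to graphs without ties or weaken the bipartite count to ``at least two and at most four.''
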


\noindent \textit{Proof. }
Let $v \in V$ denote an arbitrary but fixed vertex. We now denote the set
$\{v\}$ as $N_0$ and the set of direct neighbors of $v$ as $N_1$ to define the
$i$-th neighborhood $N_i$ for $i \geq 2$ as 
\begin{equation*}
	N_{i} = \left( \bigcup_{u \in N_{i-1}}\!\!\!N(u) \right) \setminus 
		\left(\bigcup_{j=1}^{i-1}N_j\right) \enspace .
\end{equation*}
That is, the set $N_i$ contains all nodes with shortest path to $v$ of length~$i$.

We now show by an induction on $k = 0,1,2,\dots$ that the colors of all nodes in
$N_{2\cdot k}$ are determined by the color of $v$. The base-case is trivial
since for $k=0$ we have $N_0 = \{v\}$. For the induction step we observe that according to the
induction hypothesis the color of each node in $N_{2\cdot k}$ is determined.
We now observe that the color at time $1$ of each node in $N_{2\cdot k+1}$ is 
determined by $\beta$ and the colors at time $0$ of the nodes in $N_{2\cdot k}$. 
Vice versa, also the colors at time $0$ of nodes in $N_{2\cdot (k+1)}$ 
are determined by $\beta$ and the colors at time $1$ of each node in $N_{2\cdot k+1}$. 
This concludes the induction.

\begin{wrapfigure}{r}{0.33\textwidth}
\centering
\includegraphics{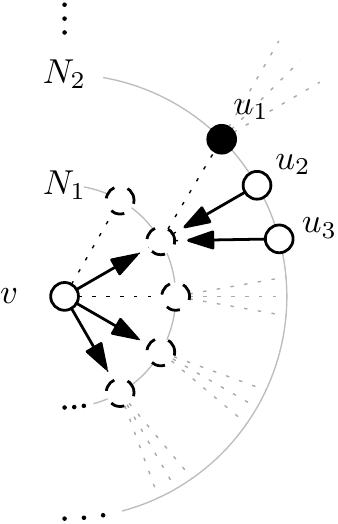}
{
\captionsetup{format=plain,font=small}
\caption{A bad arrows assignment where the opinions of the nodes of each second
neighborhood of the graph are uniquely determined.}
\label{fig:bad-arrows-to-opinion-assignment}
}
\vspace{-1\baselineskip}
\end{wrapfigure}

An example is shown in \autoref{fig:bad-arrows-to-opinion-assignment}. In this
example it is clear that $v$ and, for example, $u_1$ must have a different
color. Since $u_1$ does not have a bad arrow to its neighbor in $N_1$, it has
the same color in the next round as this neighbor. But this neighbor's color in
the next round is different to the current color of $v$ because of the bad
arrow assignment.

Observe that from above induction the lemma follows immediately for bipartite
graphs. We can fix the colors for two arbitrary nodes, one from each of the
two sets of non-adjacent nodes, to determine all other nodes' colors. 
This gives us four possible opinion
assignments for a given bad arrow assignment $\beta$. On the other hand, if the
graph is not bipartite there must exist a cycle of odd length. The opinion
assignments for all nodes of this cycle are determined by $\beta$ with the same
argument as in above induction. Therefore, not only the colors of even
neighborhoods $N_{2\cdot k}$ are determined, but also of odd neighborhoods
$N_{2\cdot k + 1}$. This leaves us with exactly two possible initial opinion
assignments, which concludes the proof. \qed

The following lemma shows that the \dtime does not depend, at least straightforwardly, on the diameter.

\begin{lemma}
For any given graph $G$ with diameter $\Delta$, there exists a graph $G'$ with the following properties:
\begin{itemize}
\item For any opinion assignment $f$ for $G$, there exists and assignment $f'$ for $G'$ such that the \contime of $G$ is the same as in $G'$
\item The diameter of $G'$ is constant
\item $G$ is a subgraph of $G'$.
\end{itemize}
\end{lemma}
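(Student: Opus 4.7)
The plan is to construct $G'$ by augmenting $G$ with two large \emph{stable hub cliques}, one permanently white and one permanently black, and attaching each original vertex to exactly one vertex in each hub. The two added neighbors of any $v\in V(G)$ carry opposite colors that never change, so their contributions cancel in the majority computation at $v$, leaving the dynamics on $V(G)$ untouched while collapsing the diameter to a small constant.

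Concretely, write $V(G)=\{v_1,\dots,v_n\}$, introduce two disjoint cliques $H_W=\{h_1^W,\dots,h_n^W\}$ and $H_B=\{h_1^B,\dots,h_n^B\}$ (we may assume $n\ge 3$, since the statement is trivial otherwise), and add the edges $\{v_i,h_i^W\}$ and $\{v_i,h_i^B\}$ for every $i$. Define $f'$ by extending $f$ with $f'(h)=1$ for $h\in H_W$ and $f'(h)=0$ for $h\in H_B$. Then $G\subseteq G'$ by construction, and any two original vertices are linked by the path $v_i-h_i^W-h_j^W-v_j$ of length $3$, while any two vertices in $H_W\cup H_B$ are at distance at most $4$ via the hub cliques; hence $\mathrm{diam}(G')\le 4$.

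The heart of the argument is an induction on $t$ proving simultaneously that (i) every $h\in H_W$ keeps opinion $1$ and every $h\in H_B$ keeps opinion $0$ in $(G',f')$, and (ii) $f'_t(v_i)=f_t(v_i)$ for all $i$ and all $t$. For (i), each $h_i^W$ has $n-1\ge 2$ neighbors inside $H_W$, all white by the inductive hypothesis, plus the single external neighbor $v_i$; white is therefore a clear majority, so $h_i^W$ stays white, and the case of $H_B$ is symmetric. Given (i), in $G'$ the vertex $v_i$ sees exactly one extra white neighbor and one extra black neighbor relative to its neighborhood in $G$; these additions shift both opinion counts by $+1$, so the sign of the white-minus-black difference is preserved and ties remain ties. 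Combined with the induction hypothesis on $V(G)$, this yields (ii), and consequently $\ctime(G',f')=\ctime(G,f)$.

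The only mildly delicate step is (i), since we must ensure the hubs survive the very first round before any inductive leverage is available; this is precisely why we took each clique of size at least $n$, so that the single external neighbor of any $h_i^{W/B}$ is always outvoted by its $n-1$ internal clique-neighbors. Everything else, including the base case $t=0$, is immediate from the construction, so the lemma follows.
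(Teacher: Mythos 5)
Your proof is correct and follows essentially the same idea as the paper's: attach two permanently monochromatic structures, one white and one black, so that every original vertex gains exactly one stable white and one stable black neighbor whose votes cancel, while the added cliques collapse the diameter to a constant. The only difference is the gadget --- the paper uses a universal hub vertex $u_0$ (resp.\ $u_1$) joined to all of $G$ and kept stable by an attached clique $C_0$ (resp.\ $C_1$), whereas you match each $v_i$ to its own pair of clique vertices --- and your write-up spells out the cancellation and hub-stability induction that the paper leaves implicit.
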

\begin{proof}
We augment $G$ by adding a clique $C_0$ of size $n$ where all nodes have
Opinion 0 to $G$. We then add node $u_0$ initialized with $0$ and connect it to
all nodes of $G$ and $C_0$. Symmetrically, we add a clique $C_1$ of size $n$
where all nodes have Opinion 1 to $G$. We then add a node $u_1$ initialized
with $1$ and connect it to all nodes of $G$ and $C_1$. Note that every node
$u\in G$ is also in $G'$ and the opinion of $u$ is the same in both graphs for
any point in time. Hence the \contime remains the same in $G'$ and the claim
follows by observing that $G'$ has a constant diameter.
\end{proof}

\begin{figure}
\centering
\includegraphics{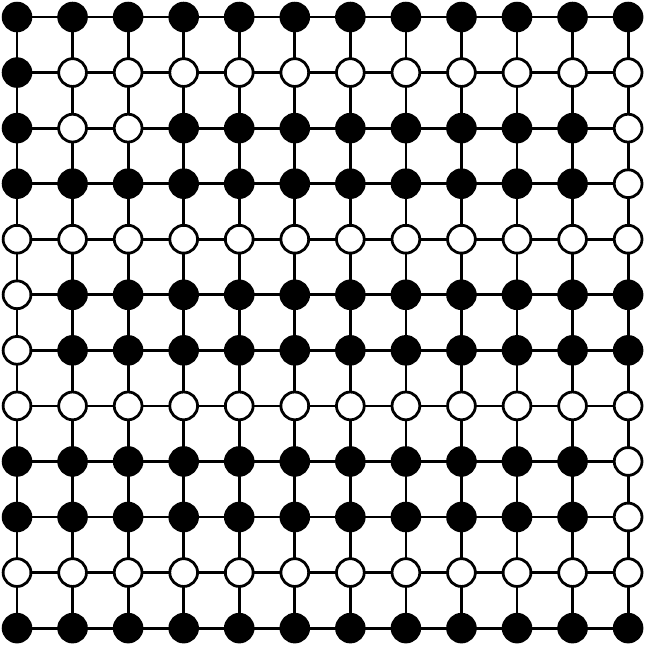}
\caption{A two-dimensional grid $G$ with $\operatorname{diam}(G) = \sqrt{n}$
and an initial opinion assignment that \stabil{s} only after $\BigOmega{n}$
steps.}
\label{fig:example-grid}
\end{figure}

Note that above lemma shows that for any connected graph $G=(V,E)$ and any
initial opinion assignment $f_0$ one can construct another graph $G'$ which has
$G$ as an induced subgraph, asymptotically the same number of nodes and edges,
the same convergence time for a related initial opinion assignment, but a
constant diameter. However, there are even examples of graphs where the
\contime of the \rd w.r.t.\ a given initial opinion assignment $f_0$ is
asymptotically larger than the diameter of the network without modifying the
graph, that is, $\ctime(G, f_0) \in \omega(\operatorname{diam}(G))$.

An example for such a graph is shown in \autoref{fig:example-grid}. In this
example, we are given a two-dimensional grid $G$ of size $|V| = \sqrt{n} \times
\sqrt{n}$. Clearly, the diameter of this graph is $2\cdot\sqrt{n}$. However, by
laying a winding \emph{serpentine} path of white nodes in an entirely black
grid as initial opinion assignment $f_0$ we can force the process to require a
\contime of $\ctime(G, f_0) = \BigOmega{n} \gg \operatorname{diam}(G) =
\BigO{\sqrt{n}}$.

\bigskip
\begin{wrapfigure}{r}{0.5\textwidth}
\vspace{-2.04\baselineskip}
\centering
\begin{subfigure}{0.22\textwidth} \centering
\includegraphics{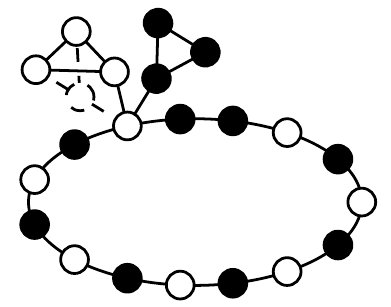}
\caption{original graph $G$}
\label{fig:example-circle-4}
\end{subfigure}
\begin{subfigure}{0.22\textwidth} \centering
\includegraphics{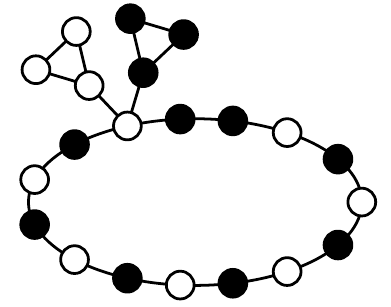}
\caption{graph $G^\Delta$}
\label{fig:example-circle-3}
\end{subfigure}

{
\captionsetup{format=plain,font=small}
\caption{The left graph $G$ is a circle graph with additional gadgets connected
to one node. The dashed node is removed to obtain the
graph $G^\Delta$ shown on the right.}
\label{fig:example-circleX}
}
\vspace{-2\baselineskip}
\end{wrapfigure}

In \autoref{thm:asym_upper}, we show that for the \dtime we have $\max_f\ctime(G^\Delta,f) \geq \max_f\ctime(G, f)$.
However, in general it is not the case that $ \ctime(G^\Delta,f) \geq \ctime(G, f)$
for every opinion assignment $f$, as we show in the following lemma.

\begin{lemma}
Let $G=(V,E)$ be a graph with initial opinion assignment $f$ and $G^\Delta$ be
the asymmetric graph constructed from $G$. In general, it does not hold that
$\ctime(G^\Delta,f) \geq \ctime(G, f)$.
\end{lemma}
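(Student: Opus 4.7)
My plan is to exhibit a small explicit counterexample in which the contraction to $G^\Delta$ accelerates rather than slows the process for a specific initial state. The driving mechanism to exploit is that collapsing a family of odd-degree non-adjacent leaves changes the parity of their common neighbor's degree, turning a tie (with lazy fixation in $G$) into a strict majority (with forced period-$2$ oscillation in $G^\Delta$). On the right initial state, the lazy behavior at the tie makes $G$ take one extra step to become periodic, while $G^\Delta$ is already periodic from time $0$.

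Concretely, I take $G=K_{1,4}$, a star with center $c$ and four pendant leaves $\ell_1,\ell_2,\ell_3,\ell_4$. The four leaves share the neighborhood $\{c\}$, have odd degree $1$, and are pairwise non-adjacent, so they form one family which by \autoref{def:asym-graph} is contracted to a single leaf $\ell$; thus $G^\Delta\cong K_2$. I then choose $f(c)=0$, two leaves white and two leaves black, and let the induced assignment on $V^\Delta$ be the restriction to the chosen representative (either color works below).

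The verification is by direct simulation. In $G$ the center has even degree $4$ and sees a $2$--$2$ split, so it stays black by the lazy tie-breaking rule; all four leaves see $c=0$ and turn black at time $1$; the configuration is monochromatic black from time $1$ on, so $f_1=f_3$ but $f_0\neq f_2$, giving $\ctime(G,f)=1$. In $G^\Delta=K_2$, however, $c$ has odd degree $1$ and no tie is possible: either the pair is already monochromatic at time $0$ and stays there, or $c$ and $\ell$ swap colors at every step; in both cases $f_2=f_0$, so $\ctime(G^\Delta,f)=0<1=\ctime(G,f)$, proving the lemma.

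There is no substantive obstacle; the only point worth double-checking is that the four leaves genuinely form a family of the type contracted to a single vertex, so that the parity of the center's degree is flipped by the contraction (it is, since they are non-adjacent and have odd degree). A slightly larger variant — a cycle with two pendant leaves attached to one of its vertices, in the spirit of \autoref{fig:example-circleX} — gives the same conclusion by the same mechanism and is closer to the construction illustrated in the figure.
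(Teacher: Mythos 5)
Your proof is correct and takes essentially the same approach as the paper, which simply exhibits an explicit counterexample (a cycle with pendant gadgets, \autoref{fig:example-circleX}); you use the star $K_{1,4}$ instead, but the mechanism---contracting a family of odd-degree non-adjacent leaves flips the parity of the common neighbor's degree and eliminates the lazy tie---is the same one the paper's figure illustrates. Your simulation of both $G$ and $G^\Delta$ is complete and correctly verifies $\ctime(G^\Delta,f)=0<1=\ctime(G,f)$, which is all the lemma requires.
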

\begin{proof}
An example for a graph for which $\ctime(G^\Delta,f) \leq \ctime(G, f)$ is shown
in \autoref{fig:example-circleX}.
\end{proof}

\end{document}